\documentclass[a4paper,USenglish,cleveref, autoref, thm-restate]{lipics-v2021}



\bibliographystyle{plainurl}

\title{Finding shortest non-separating and non-disconnecting paths} 



\author{Yasuaki Kobayashi}{Kyoto University, Kyoto, Japan}{kobayashi@iip.ist.i.kyoto-u.ac.jp}
{https://orcid.org/0000-0003-3244-6915}
{JSPS KAKENHI Grant Numbers JP20H05793, JP20K19742, JP21H03499}
\author{Shunsuke Nagano}{Kyoto University, Kyoto, Japan}{shunsuke.mac199921@icloud.com}{}{}

\author{Yota Otachi}
{Nagoya University, Nagoya, Japan
\and \url{https://www.math.mi.i.nagoya-u.ac.jp/~otachi/}}
{otachi@nagoya-u.jp}
{https://orcid.org/0000-0002-0087-853X}
{JSPS KAKENHI Grant Numbers
  JP18H04091, 
  JP18K11168, 
  JP18K11169, 
  JP20H05793, 
  JP21K11752. 
}

\authorrunning{Y. Kobayashi, S. Nagano, Y. Otachi} 

\Copyright{Kobayashi, Nagano, Otachi} 

\ccsdesc[100]{Mathematics of computing~Discrete mathematics~Combinatorics~Combinatorial algorithms} 

\keywords{Non-separating path, Parameterized complexity} 


\relatedversion{} 




\nolinenumbers 

\usepackage{cases}

\newcommand{\dist}{{\rm dist}}
\newcommand{\DP}{{\sf dp}}
\newcommand{\leg}{{\sf leg}}
\newcommand{\rep}{{\sf rep}}
\newcommand{\diam}{{\rm diam}}
\newcommand{\pathseq}[1]{{\langle #1\rangle}}

\EventEditors{John Q. Open and Joan R. Access}
\EventNoEds{2}
\EventLongTitle{42nd Conference on Very Important Topics (CVIT 2016)}
\EventShortTitle{CVIT 2016}
\EventAcronym{CVIT}
\EventYear{2016}
\EventDate{December 24--27, 2016}
\EventLocation{Little Whinging, United Kingdom}
\EventLogo{}
\SeriesVolume{42}
\ArticleNo{23}

\begin{document}

\maketitle

\begin{abstract}
For a connected graph $G = (V, E)$ and $s, t \in V$, a non-separating $s$-$t$ path is a path $P$ between $s$ and $t$ such that the set of vertices of $P$ does not separate $G$, that is, $G - V(P)$ is connected.
An $s$-$t$ path is non-disconnecting if $G - E(P)$ is connected.
The problems of finding shortest non-separating and non-disconnecting paths are both known to be NP-hard.
In this paper, we consider the problems from the viewpoint of parameterized complexity.
We show that the problem of finding a non-separating $s$-$t$ path of length at most $k$ is W[1]-hard parameterized by $k$, while the non-disconnecting counterpart is fixed-parameter tractable parameterized by $k$.
We also consider the shortest non-separating path problem on several classes of graphs and show that this problem is NP-hard even on bipartite graphs, split graphs, and planar graphs.
As for positive results, the shortest non-separating path problem is fixed-parameter tractable parameterized by $k$ on planar graphs and polynomial-time solvable on chordal graphs if $k$ is the shortest path distance between $s$ and $t$.
\end{abstract}

\section{Introduction}

Lov\'asz's path removal conjecture states the following claim:
There is a function $f\colon \mathbb N \to \mathbb N$ such that for every $f(k)$-connected graph $G$ and every pair of vertices $u$ and $v$, $G$ has a path $P$ between $u$ and $v$ such that $G - V(P)$ is $k$-connected.
This claim still remains open, while some spacial cases have been resolved~\cite{ChenGY:Graph:2003,KawarabayashiLRW:weaker:2008,Kriesell:Induced:2001,Tutte:How:1963}.
Tutte~\cite{Tutte:How:1963} proved that $f(1) = 3$, that is, every triconnected graph satisfies that for every pair of vertices, there is a path between them whose removal results a connected graph.
Kawarabayashi et al.~\cite{KawarabayashiLRW:weaker:2008} proved a weaker version of this conjecture:
There is a function $f\colon \mathbb N \to \mathbb N$ such that for every $f(k)$-connected graph $G$ and every pair of vertices $u$ and $v$, $G$ has an induced path $P$ between $u$ and $v$ such that $G - E(P)$ is $k$-connected.

As a practical application, let us consider a network represented by an undirected graph $G$, and we would like to build a private channel between a specific pair of nodes $s$ and $t$.
For some security reasons, the path used in this channel should be dedicated to the pair $s$ and $t$, and hence all other connections do not use all nodes and/or edges on this path while keeping their connections.
In graph-theoretic terms, the vertices (resp.\ edges) of the path between $s$ and $t$ does not form a separator (resp.\ a cut) of $G$.
Tutte's result~\cite{Tutte:How:1963} indicates that such a path always exists in triconnected graphs, but may not exist in biconnected graphs.
In addition to this connectivity constraint, the path between $s$ and $t$ is preferred to be short due to the cost of building a private channel.
Motivated by such a natural application, the following two problems are studied.

\begin{definition}
    Given a connected graph $G$, $s, t \in V(G)$, and an integer $k$, \textsc{Shortest Non-Separating Path} asks whether there is a path $P$ between $s$ and $t$ in $G$ such that the length of $P$ is at most $k$ and $G - V(P)$ is connected.
\end{definition}

\begin{definition}
    Given a connected graph $G$, $s, t \in V(G)$, and an integer $k$, \textsc{Shortest Non-Disconnecting Path} asks whether there is a path $P$ between $s$ and $t$ in $G$ such that the length of $P$ is at most $k$ and $G - E(P)$ is connected.
\end{definition}

We say that a path $P$ is \emph{non-separating} (in $G$) if $G - V(P)$ is connected and is \emph{non-disconnecting} (in $G$) if $G - E(P)$ is connected.

\bigskip
\noindent
\textbf{Related work.}
The shortest path problem in graphs is one of the most fundamental combinatorial optimization problems, which is studied under various settings.
Indeed, our problems \textsc{Shortest Non-Separating Path} and \textsc{Shortest Non-Disconnecting Path} can be seen as variants of this problem.
From the computational complexity viewpoint, \textsc{Shortest Non-Separating Path} is known to be NP-hard and its optimization version cannot be approximated with factor $|V|^{1 - \varepsilon}$ in polynomial time for $\varepsilon > 0$ unless P $=$ NP~\cite{WuC:approximability:2009}.
\textsc{Shortest Non-Disconnecting Path} is shown to be NP-hard on general graphs and polynomial-time solvable on chordal graphs~\cite{Mao:Shortest:2021}.

\bigskip
\noindent
\textbf{Our results.}
We investigate the parameterized complexity of both problems.
We show that \textsc{Shortest Non-Separating Path} is W[1]-hard and \textsc{Shortest Non-Disconnecting Path} is fixed-parameter tractable parameterized by $k$.
A crucial observation for the fixed-parameter tractability of \textsc{Shortest Non-Disconnecting Path} is that the set of edges in a non-disconnecting path can be seen as an independent set of a cographic matroid.
By applying the representative family of matroids~\cite{FominLPS:Efficient:2016}, we show that \textsc{Shortest Non-Disconnecting Path} can be solved in $2^{\omega k}|V|^{O(1)}$ time, where $\omega$ is the exponent of the matrix multiplication.
We also show that \textsc{Shortest Non-Disconnecting Path} is OR-compositional, that is, there is no polynomial kernelization unless coNP $\subseteq$ NP$/$poly.
To cope with the intractability of \textsc{Shortest Non-Separating Path}, we consider the problem on planar graphs and show that it is fixed-parameter tractable parameterized by $k$.
This result can be generalized to wider classes of graphs which have the \emph{diameter-treewidth property}~\cite{Eppstein:Diameter:2000}.
We also consider \textsc{Shortest Non-Separating Path} on several classes of graphs.
We can observe that the complexity of \textsc{Shortest Non-Separating Path} is closely related to that of \textsc{Hamiltonian Cycle} (or \textsc{Hamiltonian Path} with specified end vertices).
This observation readily proves the NP-completeness of \textsc{Shortest Non-Separating Path} on bipartite graphs, split graphs, and planar graphs.
For chordal graphs, we devise a polynomial-time algorithm for \textsc{Shortest Non-Separating Path} for the case where $k$ is the shortest path distance between $s$ and $t$.

\section{Preliminaries}
We use standard terminologies and known results in matroid theory and parameterized complexity theory, which are briefly discussed in this section. See~\cite{Cygan:Parameterized:2015,Oxley:Matroid:2006} for details. 

\bigskip
\noindent
\textbf{Graphs.}
Let $G$ be a graph.
The vertex set and edge set of $G$ are denoted by $V(G)$ and $E(G)$, respectively.
For $v \in V(G)$, the open neighborhood of $v$ in $G$ is denoted by $N_G(v)$ (i.e., $N_G(v) = \{u \in V(G) : \{u, v\} \in E(G)\}$) and the closed neighborhood of $v$ in $G$ is denoted by $N_G[v]$ (i.e., $N_G[v] = N_G(v) \cup \{v\}$).
We extend this notation to sets: $N_G(X) = \bigcup_{v \in X}N_G(v) \setminus X$ and $N_G[X] = N_G(X) \cup X$ for $X \subseteq V(G)$.
For $u, v \in V(G)$, we denote by $\dist_G(u, v)$ the length of a shortest path between $u$ and $v$ in $G$, where the length of a path is defined to the number of edges in it.
We may omit the subscript of $G$ from these notations when no confusion arises.
For $X \subseteq V(G)$, we write $G[X]$ to denote the subgraph of $G$ induced by $X$.
For notational convenience, we may use $G - X$ instead of $G[V(G) \setminus X]$.
For $F \subseteq E$, we also use $G - F$ to represent the subgraph of $G$ consisting all vertices of $G$ and all edges in $E \setminus F$.
For vertices $u$ and $v$, a path between $u$ and $v$ is called a \emph{$u$-$v$ path}.
A vertex is called a \emph{pendant} if its degree is exactly $1$.

\bigskip
\noindent
\textbf{Matroids and representative sets.}
Let $E$ be a finite set.
If $\mathcal I \subseteq 2^E$ satisfies the following axioms, the pair $\mathcal M = (E, \mathcal I)$ is called a \emph{matroid}:
(1) $\emptyset \in \mathcal I$; (2) $Y \in \mathcal I$ implies $X \in \mathcal I$ for $X \subseteq Y \subseteq 2^E$; and (3) for $X, Y \in \mathcal I$ with $|X| < |Y|$, there is $e \in Y \setminus X$ such that $X \cup \{e\} \in \mathcal I$.
Each set in $\mathcal I$ is called an \emph{independent set} of $\mathcal M$.
From the third axiom of matroids, it is easy to observe that every (inclusion-wise) maximal independent set of $\mathcal M$ have the same cardinality.
The \emph{rank} of $\mathcal M$ is the maximum cardinality of an independent set of $\mathcal M$.
A matroid $\mathcal M = (E, \mathcal I)$ of rank $n$ is \emph{linear} (or \emph{representable}) over a field $\mathbb F$ if there is a matrix $M \in \mathbb F^{n \times |E|}$ whose columns are indexed by $E$ such that $X \in \mathcal I$ if and only if the set of columns indexed by $X$ is linearly independent in $M$.

Let $G = (V, E)$ be a graph.
A \emph{cographic matroid} of $G$ is a matroid $\mathcal M(G) = (E, \mathcal I)$ such that $F \subseteq E$ is an independent set of $\mathcal M(G)$ if and only if $G - F$ is connected.
Every cographic matroid is linear and its representation can be computed in polynomial time~\cite{Oxley:Matroid:2006}.

Our algorithmic result for \textsc{Shortest Non-Disconnected Path} is based on \emph{representative families} due to \cite{FominLPS:Efficient:2016}.

\begin{definition}
    Let $\mathcal M = (E, \mathcal I)$ be a matroid and let $\mathcal F \subseteq \mathcal I$ be a family of independent sets of $\mathcal M$.
    For an integer $q \ge 0$, we say that $\widehat{\mathcal F} \subseteq \mathcal F$ is \emph{$q$-representative for $\mathcal F$} if the following condition holds: For every $Y \subseteq E$ of size at most $q$, if there is $X \in \mathcal F$ with $X \cap Y = \emptyset$ such that $X \cup Y \in \mathcal I$, then there is $\widehat{X} \in \widehat{\mathcal F}$ with $\widehat{X} \cap Y = \emptyset$ such that $\widehat{X} \cup Y \in \mathcal I$.
\end{definition}

\begin{theorem}[\cite{FominLPS:Efficient:2016,LokshtanovMPS:Deterministic:2018}]\label{thm:rep}
    Given a linear matroid $\mathcal M = (E, \mathcal I)$ of rank $n$ that is represented as a matrix $M \in \mathbb F^{n \times |E|}$ for some field $\mathbb F$, a family $\mathcal F \subseteq \mathcal I$ of independent sets of size $p$, and an integer $q$ with $p + q \le n$, there is a deterministic algorithm computing a $q$-representative family $\widehat{\mathcal F} \subseteq \mathcal F$ of size $np\binom{p + q}{p}$ with
    \begin{align*}
        O\left(|\mathcal F|\cdot \left(\binom{p + q}{p}p^3n^2 + \binom{p+q}{q}^{\omega - 1} \cdot (pn)^{\omega - 1}\right)\right) + (n + |E|)^{O(1)}
    \end{align*}
    field operations, where $\omega < 2.373$ is the exponent of the matrix multiplication.
\end{theorem}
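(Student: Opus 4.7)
The plan is to follow the exterior-algebra construction from \cite{FominLPS:Efficient:2016,LokshtanovMPS:Deterministic:2018}. The key preparatory step is to reduce to the case $n = p + q$ by \emph{truncating} the matroid: using deterministic matroid truncation, in polynomial time we build a matrix $M' \in \mathbb{F}^{(p+q) \times |E|}$ such that every set of at most $p+q$ columns of $M'$ is linearly independent in $M'$ iff the corresponding subset is independent in $\mathcal{M}$. Since the representative-family property only involves sets of size at most $p+q$, we may work with $M'$ throughout.

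Next, encode each $X \in \mathcal{F}$ of size $p$ as a vector $v_X \in \mathbb{F}^{\binom{p+q}{p}}$ whose coordinates are the $p \times p$ minors of the submatrix of $M'$ on the columns indexed by $X$, and associate an analogous row vector $u_Y$ to each candidate $Y$ of size $q$. By the Cauchy--Binet formula, $X \cup Y$ is independent in $\mathcal{M}$ exactly when $\langle v_X, u_Y \rangle \neq 0$. Consequently, any subfamily $\widehat{\mathcal{F}} \subseteq \mathcal{F}$ whose vectors $\{v_{\widehat{X}} : \widehat{X} \in \widehat{\mathcal{F}}\}$ span the same linear subspace as $\{v_X : X \in \mathcal{F}\}$ is automatically $q$-representative, since then no linear functional $u_Y$ can distinguish the two collections.

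This reduces the task to maintaining a spanning subset of a collection of vectors in $\mathbb{F}^{\binom{p+q}{p}}$. I would scan $\mathcal{F}$ and greedily retain each $X$ whose $v_X$ enlarges the current span; implementing the span test by incremental Gaussian elimination after each insertion accounts for the first term $\binom{p+q}{p} p^3 n^2$ in the running time. To obtain the second term $\binom{p+q}{q}^{\omega - 1}(pn)^{\omega - 1}$, the span-membership tests are performed on entire blocks of vectors simultaneously and compressed by fast rectangular matrix multiplication, following the block scheme developed in the cited papers. The final additive $(n + |E|)^{O(1)}$ term absorbs the cost of truncation and of reading the representation.

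I expect the main obstacle to be the matrix-multiplication speed-up while meeting the stated size bound $np\binom{p+q}{p}$. The Cauchy--Binet translation and the greedy span analysis are clean once set up, and the deterministic truncation may be invoked as a black box; the delicate part is the block-wise linear-algebra bookkeeping — batching candidates, multiplying against the current basis, re-triangularising after each batch, and interleaving this with the greedy selection so that the output size stays within $np\binom{p+q}{p}$ while the running time attains the $\omega$-exponent. This step has to be adapted essentially verbatim from \cite{FominLPS:Efficient:2016,LokshtanovMPS:Deterministic:2018}, since any naive implementation only yields a polynomial-exponent bound without the matrix-multiplication improvement.
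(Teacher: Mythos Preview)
The paper does not prove \Cref{thm:rep} at all: it is quoted verbatim as a known result from \cite{FominLPS:Efficient:2016,LokshtanovMPS:Deterministic:2018} and used as a black box in the dynamic program for \textsc{Shortest Non-Disconnecting Path}. There is therefore no ``paper's own proof'' to compare your proposal against.

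That said, your sketch is a faithful summary of the argument in the cited references: deterministic truncation to rank $p+q$, the exterior-algebra encoding via $p\times p$ minors, the Cauchy--Binet identity turning independence of $X\cup Y$ into a nonvanishing inner product, and greedy basis extraction accelerated by batched matrix multiplication. One small remark: the stated output size $np\binom{p+q}{p}$ is an artefact of how the paper copied the bound; after truncation the ambient dimension is $\binom{p+q}{p}$, so the representative family has at most $\binom{p+q}{p}$ sets, and the extra $np$ factor is slack. Your concern about the block linear algebra is well placed---that is indeed the only non-routine part---but since the present paper treats the whole theorem as imported, you would simply cite it rather than reproduce that bookkeeping here.
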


\bigskip
\noindent
\textbf{Parameterized complexity.}
A \emph{parameterized problem} is a language $L \subseteq \Sigma^* \times \mathbb N$, where $\Sigma$ is a finite alphabet. 
We say that $L$ is \emph{fixed-parameter tractable (parameterized by $k$)} if there is an algorithm deciding if $(I, k) \in L$ for given $(I, k) \in \Sigma^* \times \mathbb N$ in time $f(k)|I|^{O(1)}$, where $f$ is a computable function.
A \emph{kernelization} for $L$ is a polynomial-time algorithm that given an instance $(I, k) \in \Sigma^* \times \mathbb N$, computes an ``equivalent'' instance $(I', k') \in \Sigma^* \times \mathbb N$ such that (1) $(I, k) \in L$ if and only if $(I', k') \in L$ and (2) $|I'| + k' \le g(k)$ for some computable function $g$.
We call $(I', k')$ a \emph{kernel}.
If the function $g$ is a polynomial, then the kernelization algorithm is called a \emph{polynomial kernelization} and its output $(I', k')$ is called a \emph{polynomial kernel}.
An \emph{OR-composition} is an algorithm that given $p$ instances $(I_1, k), \ldots (I_p, k) \in \Sigma^* \times \mathbb N$ of $L$, computes an instance $(I', k') \in \Sigma^* \times \mathbb N$ in time $(\sum_{1 \le i \le p}|I_i| + k)^{O(1)}$ such that (1) $(I', k') \in L$ if and only if $(I_i, k) \in L$ for some $1 \le i \le p$ and (2) $k' = k^{O(1)}$.
For a parameterized problem $L$, its \emph{unparameterized problem} is a language $L' = \{x\#1^k : (x, k) \in L\}$, where$\# \notin \Sigma$ is a blank symbol and $1 \in \Sigma$ is an arbitrary symbol.

\begin{theorem}[\cite{BodlaenderDFH:problems:2009}]\label{thm:composition}
    If a parameterized problem $L$ admits an OR-composition and its unparameterized version is NP-complete, then $L$ does not have a polynomial kernelization unless \emph{coNP} $\subseteq$ \emph{NP}$/$\emph{poly}.
\end{theorem}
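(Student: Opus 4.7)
The plan is to argue by contradiction using the Fortnow--Santhanam theorem, which states that an NP-hard language cannot admit a polynomial-size OR-distillation unless coNP $\subseteq$ NP$/$poly. Suppose $L$ admits both an OR-composition and a polynomial kernelization $K$ whose output size is bounded by some polynomial $p$. From these two ingredients I will construct an OR-distillation of the unparameterized version $L'$, which is NP-complete by hypothesis; Fortnow--Santhanam then delivers the conclusion.

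The distillation proceeds as follows. On input $t$ instances $x_1,\ldots,x_t$ of $L'$, each of size at most $n$, I decode each $x_i$ as a pair $(I_i, k_i)$ of $L$ using the padding $1^{k_i}$, so in particular $k_i \le n$. I then partition the instances into at most $n$ buckets $B_1,\ldots,B_n$ according to the value of the parameter. Since every instance in a given bucket $B_k$ shares the same parameter $k$, the OR-composition applies to $B_k$, producing a single instance $(I'_k, k'_k)$ of $L$ with $k'_k = k^{O(1)} \le n^{O(1)}$. Feeding each $(I'_k, k'_k)$ into $K$ then yields an equivalent kernel of size at most $p(k'_k) = n^{O(1)}$, where crucially this bound is independent of $t$.

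At this point I have at most $n$ kernels, each of size $n^{O(1)}$, whose disjunction reflects the OR of the original $x_i$'s. To finish the distillation I collapse them into one instance: the statement ``at least one of these $n$ kernels is a yes-instance of $L$'' is an NP-statement of total input size $n^{O(1)}$, so by NP-completeness of $L'$ it can be Karp-reduced in polynomial time to a single instance $y$ of $L'$ of size $n^{O(1)}$. By construction $y \in L'$ iff some $x_i \in L'$, and the size of $y$ is bounded by $n^{O(1)}$ independently of $t$, which is exactly an OR-distillation. Applying Fortnow--Santhanam then yields the stated conclusion.

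The main obstacle is to coordinate the three size guarantees so that the final bound remains $n^{O(1)}$ with no dependence on $t$. The bucketing step is forced by the OR-composition's requirement that all instances share the same parameter; the polynomial kernelization controls the size of the intermediate instances even after the parameter has grown polynomially; and the NP-completeness of $L'$ is precisely what is needed to squeeze the $n$ intermediate instances into one. Once these components are aligned, the argument reduces to a direct invocation of Fortnow--Santhanam.
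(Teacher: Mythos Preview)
The paper does not prove this theorem at all; it is stated as a known result and attributed to Bodlaender, Downey, Fellows, and Hermelin~\cite{BodlaenderDFH:problems:2009}. There is therefore no paper proof to compare against.

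That said, your sketch is the standard argument and is correct. The only point worth tightening is the size accounting after the OR-composition step: the composed instance $(I'_k,k'_k)$ may have $|I'_k|$ as large as $\mathrm{poly}(tn)$, since the composition runs in time polynomial in its total input; it is only after applying the polynomial kernelization that the size drops to $p(k'_k)=n^{O(1)}$ and the dependence on $t$ disappears. You do state this, but it is the crux and deserves to be made explicit. With that clarified, the bucketing by parameter, the kernelization, and the final NP-completeness reduction to collapse the $O(n)$ remaining instances into one are exactly the ingredients of the Bodlaender et al.\ proof, and your invocation of Fortnow--Santhanam at the end is the intended conclusion.
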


\section{\textsc{Shortest Non-Separating Path}}
We discuss our complexity and algorithmic results for \textsc{Shortest Non-Separating Path}.

\subsection{Hardness on graph classes}\label{ssec:graph-classes}

We obverse that, in most cases, \textsc{Shortest Non-Separating Path} is NP-hard on classes of graphs for which \textsc{Hamiltonian Path} (with distinguished end vertices) is NP-hard.
Let $G = (V, E)$ be a graph and $s, t \in V$ be distinct vertices of $G$.
We add a pendant vertex $p$ adjacent to $s$ and denote the resulting graph by $G'$.
Then, we have the following observation.

\begin{observation}\label{lem:complexity:series}
    For every non-separating path $P$ between $s$ and $t$ in $G'$, $V(G) \setminus V(P) = \{p\}$. 
\end{observation}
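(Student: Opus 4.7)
The plan is to exploit the pendant structure of $p$. First I would observe that in $G'$, the vertex $p$ has degree exactly $1$, with unique neighbor $s$. Since any $s$-$t$ path $P$ must enter and leave each internal vertex, a degree-$1$ vertex cannot be internal on $P$. As $p$ is also not an endpoint (both endpoints $s$ and $t$ lie in $V(G)$, while $p \notin V(G)$), we conclude $p \notin V(P)$, so $p \in V(G') \setminus V(P)$.

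Second, I would argue that no other vertex can survive the deletion. In $G' - V(P)$, the vertex $p$ is isolated, because its unique neighbor $s$ belongs to $V(P)$. Hence $\{p\}$ is a connected component of $G' - V(P)$. Since $P$ is assumed to be non-separating in $G'$, the graph $G' - V(P)$ is connected, and therefore $\{p\}$ must be its only component, giving $V(G') \setminus V(P) = \{p\}$. Using $V(G') = V(G) \cup \{p\}$, this is exactly the claimed equality (interpreting the right-hand side of the statement as $V(G') \setminus V(P)$).

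I do not anticipate any obstacle: the entire argument rests on the single fact $\deg_{G'}(p) = 1$ together with the connectivity hypothesis, and it is literally a two-step deduction (``$p$ cannot be on $P$'' and ``$p$ is isolated after deletion''). The real content is not the proof itself but its consequence for the subsequent reductions: the observation forces any non-separating $s$-$t$ path in $G'$ to visit every vertex of $G$, i.e., to be a Hamiltonian $s$-$t$ path of $G$. This provides the promised bridge to \textsc{Hamiltonian Path} with prescribed end vertices and will be used in \Cref{ssec:graph-classes} to transfer NP-hardness from restricted graph classes to \textsc{Shortest Non-Separating Path} by taking $k = |V(G)| - 1$.
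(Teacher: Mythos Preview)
Your argument is correct and is exactly the trivial deduction the paper intends: the paper states this as an \emph{observation} without proof, and your two steps (``$p$ cannot lie on $P$'' because it is a degree-$1$ non-endpoint, and ``$p$ is isolated in $G'-V(P)$'' because its only neighbor $s$ is deleted) are the natural justification. You also correctly flag that the displayed equality should read $V(G')\setminus V(P)=\{p\}$ rather than $V(G)\setminus V(P)=\{p\}$, since $p\notin V(G)$; this is a typo in the paper, and your interpretation is the only sensible one.
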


Suppose that for a class $\mathcal C$ of graphs,
\begin{itemize}
    \item the problem of deciding whether given graph $G \in \mathcal C$ has a Hamiltonian path between specified vertices $s$ and $t$ in $G$ is NP-hard and
    \item $G \in \mathcal C$ implies $G' \in \mathcal C$.
\end{itemize}
By \Cref{lem:complexity:series}, $G'$ has a non-separating $s$-$t$ path if and only if $G$ has a Hamiltonian path between $s$ and $t$.
This implies that the problem of finding a non-separating path between specified vertices is NP-hard on class $\mathcal C$.

\begin{theorem}\label{thm:hardness:classes}
    The problem of deciding if an input graph has a non-separating $s$-$t$ path is NP-complete even on planar graphs, bipartite graphs, and split graphs.
\end{theorem}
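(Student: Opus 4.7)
The plan is to instantiate the generic reduction sketched above in the excerpt for each of the three graph classes, using known NP-hardness of \textsc{Hamiltonian Path} with specified endpoints as the starting point. The only verifications needed are (i) the existence of an NP-hardness result for \textsc{Hamiltonian $s$-$t$ Path} on each class and (ii) closure of the class under the pendant-addition operation $G \mapsto G'$ defined just before \Cref{lem:complexity:series}. Membership in NP is immediate, so we only need NP-hardness.

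First I would recall that \textsc{Hamiltonian $s$-$t$ Path} is NP-hard on planar graphs (Itai--Papadimitriou--Szwarcfiter) and on bipartite graphs (e.g.\ via the same reduction, or via a direct argument from \textsc{3-Sat}). For these two classes, closure under pendant-addition is transparent: attaching a degree-$1$ vertex to $s$ does not create a $K_5$- or $K_{3,3}$-minor and can be drawn inside any face incident to $s$, so $G' \in \mathcal{C}$ when $G \in \mathcal{C}$; likewise, placing $p$ on the opposite side of the bipartition from $s$ preserves bipartiteness. Hence by \Cref{lem:complexity:series}, $G'$ has a non-separating $s$-$t$ path iff $G$ has a Hamiltonian $s$-$t$ path, yielding NP-hardness on both classes.

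For split graphs, I would use the NP-hardness of \textsc{Hamiltonian Path} on split graphs (Müller). The key point is that in standard reductions the two endpoints $s,t$ can be taken to lie in the clique part $K$ of the split partition $V = K \cup I$; if this is not the case for a particular reduction, one can reroute by a small gadget (e.g., attach auxiliary clique-vertices and force the path to start and end there). Assuming $s \in K$, adding the pendant $p$ with $N(p) = \{s\}$ preserves splitness: placing $p$ in $I$ keeps $I \cup \{p\}$ independent (since $p$'s only neighbor is in $K$) and $K$ remains a clique. Therefore $G' \in \mathcal{C}$ and the same conclusion via \Cref{lem:complexity:series} applies.

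The main obstacle will be the split-graph case, specifically guaranteeing that the NP-hardness reduction for \textsc{Hamiltonian Path} on split graphs can be arranged so that both specified endpoints lie in the clique part; the planar and bipartite cases are essentially a one-line application of \Cref{lem:complexity:series} together with closure under pendant-addition. Membership in NP follows from the polynomial-time checkability of shortest-path length and connectivity of $G - V(P)$.
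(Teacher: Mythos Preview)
Your proposal is correct and follows essentially the same approach as the paper: both apply \Cref{lem:complexity:series} by verifying (i) NP-hardness of \textsc{Hamiltonian $s$-$t$ Path} on each class and (ii) closure under pendant addition, with the split-graph case handled exactly as you describe via M\"uller's result that the endpoints can be taken in the clique. The only minor difference is that the paper, unable to locate a direct reference for \textsc{Hamiltonian $s$-$t$ Path} on planar and bipartite graphs, explicitly supplies the folklore reduction from \textsc{Hamiltonian Cycle} (add a twin $v'$ of a vertex $v$, using a degree-$2$ vertex in the planar case to preserve planarity), whereas you simply cite the hardness as known.
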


The classes of planar graphs and bipartite graphs are closed under the operation of adding a pendant.
Recall that a graph $G$ is a \emph{split graph} if the vertex set $V(G)$ can be partitioned into a clique $C$ and an independent set $I$. Thus, for the class of split graphs, we need the assumption that the pendant added is adjacent to a vertex in $C$.

As the problem trivially belongs to NP, it suffices to show that \textsc{Hamiltonian Path} (with distinguished end vertices) is NP-hard on these classes of graphs\footnote{These results for bipartite graphs and planar graphs seem to be folklore but we were not able to find particular references.}.
For split graphs, it is known that \textsc{Hamiltonian Path} is NP-hard even if the distinguished end vertices are contained 
in the clique $C$~\cite{Muller:Hamiltonian:1996}.
Let $G$ be a graph and let $v \in V(G)$.
We add a vertex $v'$ that is adjacent to every vertex in $N_G(v)$, that is, $v$ and $v'$ are twins.
The resulting graph is denoted by $G'$.
It is easy to verify that $G$ has a Hamiltonian cycle if and only if $G'$ has a Hamiltonian path between $v$ and $v'$.
The class of bipartite graphs is closed under this operation, that is, $G'$ is bipartite.
For planar graphs, $G'$ may not be planar in general.
However, \textsc{Hamiltonian Cycle} is NP-complete even if the input graph is planar and has a vertex of degree $2$~\cite{ItaiPS:Hamilton:1982}.
We apply the above operation to this degree-$2$ vertex, and the resulting graph $G'$ is still planar.
As the problem of finding a Hamiltonian cycle is NP-hard even on bipartite graphs~\cite{Muller:Hamiltonian:1996} and planar graphs~\cite{ItaiPS:Hamilton:1982}, \Cref{thm:hardness:classes} follows.

\subsection{W[1]-hardness}

Next, we show that \textsc{Shortest Non-Separating Path} is W[1]-hard parameterized by $k$.
The proof is done by giving a reduction from \textsc{Multicolored Clique}, which is known to be W[1]-complete~\cite{FellowsHRV:parameterized:2009}.
In \textsc{Multicolored Clique}, we are given a graph $G$ with a partition $\{V_1, V_2, \ldots, V_k\}$ of $V(G)$ and asked to determine whether $G$ has a clique $C$ such that $|V_i \cap C| = 1$ for each $1 \le i \le k$.

From an instance $(G, \{V_1, \ldots, V_k\})$ of \textsc{Multicolored Clique}, we construct an instance of \textsc{Shortest Non-Separating Path} as follows.
Without loss of generality, we assume that $G$ contains more than $k$ vertices.
We add two vertices $s$ and $t$ and edges between $s$ and all $v \in V_1$ and between $t$ and all $v \in V_k$. 
For any pair of $u \in V_i$ and $v \in V_j$ with $|i - j| \ge 2$, we do the following.
If $\{u, v\} \in E$, then we remove it.
Otherwise, we add a path $P_{u,v}$ of length $2$ and a pendant vertex that is adjacent to the internal vertex $w$ of $P_{u, v}$.
Finally, we add a vertex $v^*$, add an edge between $v^*$ and each original vertex $v \in V(G)$, and add a pendant vertex $p$ adjacent to $v^*$.
The constructed graph is denoted by $H$.
See \Cref{fig:W[1]} for an illustration of the graph $H$.

\begin{figure}
    \centering
    \includegraphics[width=0.8\textwidth]{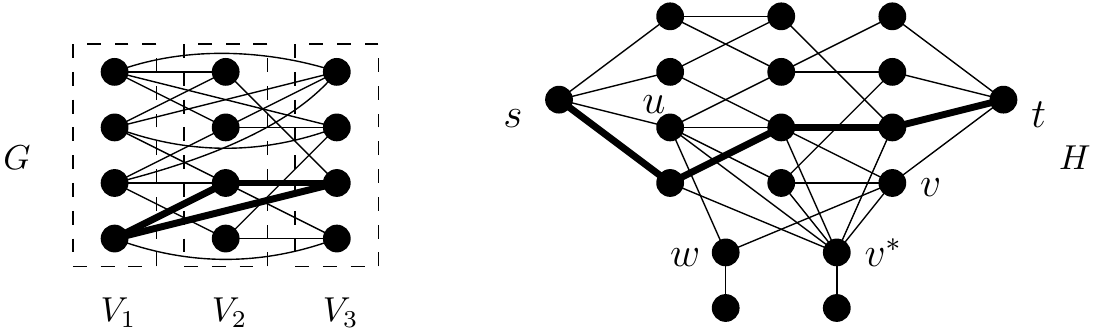}
    \caption{The left figure depicts an instance $G$ of \textsc{Multicolored Clique} and the right figure depicts the graph $H$ constructed from $G$.
    Some vertices and edges in $H$ are not drawn in this figure for visibility. The edges of a clique $C$ and the corresponding non-separating $s$-$t$ path $P$ are drawn as thick lines.}
    \label{fig:W[1]}
\end{figure}

\begin{lemma}\label{lem:complexity:W[1]}
    There is a clique $C$ in $G$ such that $|C \cap V_i| = 1$ for $1 \le i \le k$ if and only if there is a non-separating $s$-$t$ path of length at most $k + 1$ in $H$.
\end{lemma}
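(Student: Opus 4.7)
The plan is to first pin down the exact structure of any non-separating $s$-$t$ path of length at most $k+1$ in $H$, and then read off the clique condition from that structure.

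For the ($\Leftarrow$) direction, let $P$ be such a path. First I would argue that $P$ cannot contain $v^*$, since then its unique neighbor $p$ would be isolated in $H - V(P)$. The same pendant-isolation reasoning shows $P$ cannot contain any middle vertex $w$ of a gadget $P_{u,u'}$, because the pendant attached to $w$ would form a component disjoint from the one containing $v^*$ (which is genuinely present thanks to $|V(G)| > k$, ensuring some original vertex survives in $H - V(P)$). Hence $P$ uses only $s$, $t$, and original vertices of $G$, and among original vertices it can traverse only edges inside one color class or between consecutive classes $V_i, V_{i+1}$, since every edge between classes at index distance at least $2$ was deleted. Because $s$ attaches only to $V_1$, $t$ only to $V_k$, and the class index changes by at most one per edge, such an $s$-$t$ path has length at least $k+1$; the length bound then forces equality, and every edge must advance the class index by exactly one, so $P = s, v_1, v_2, \ldots, v_k, t$ with $v_i \in V_i$.

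Next I would extract the clique. The edges $\{v_i, v_{i+1}\}$ of $P$ lie between consecutive color classes and therefore come from $E(G)$. For each pair $i, j$ with $|i-j| \ge 2$, suppose for contradiction that $\{v_i, v_j\} \notin E(G)$; then the construction added the gadget $P_{v_i, v_j}$, whose middle vertex $w$ has exactly $v_i$, $v_j$, and its pendant as neighbors in $H$. Since $v_i, v_j \in V(P)$, the pair $\{w, \text{pendant of } w\}$ is a component of $H - V(P)$ disjoint from the component of $v^*$, contradicting non-separability. Therefore $\{v_i, v_j\} \in E(G)$ for every $i \ne j$, and $\{v_1, \ldots, v_k\}$ is a colorful clique.

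For the ($\Rightarrow$) direction, given a colorful clique $C = \{v_1, \ldots, v_k\}$ with $v_i \in V_i$, I would set $P = s, v_1, \ldots, v_k, t$. Its edges all exist in $H$ since clique edges between consecutive color classes are preserved, and its length is $k+1$. To verify non-separability, I would show every vertex of $H - V(P)$ lies in the component of $v^*$: the vertex $v^*$ is adjacent to $p$ and to every original vertex outside $C$ (such vertices exist because $|V(G)| > k$), and for each surviving gadget $P_{u,u'}$ the defining condition $\{u, u'\} \notin E(G)$ combined with $C$ being a clique forces at least one of $u, u'$ to lie outside $C$, yielding a neighbor of $w$ that itself connects to $v^*$.

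The main obstacle I anticipate is the structural rigidity argument in the ($\Leftarrow$) direction: ruling out paths that shortcut through a gadget middle vertex or through $v^*$ via the pendant-isolation trick, and then showing the length bound leaves no room for within-class detours. Once this is established, both the clique extraction and the forward verification reduce to straightforward case checks.
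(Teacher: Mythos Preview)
Your proposal is correct and follows essentially the same approach as the paper's proof: pendant isolation to exclude $v^*$ and gadget midpoints, then the clique extraction via the gadget $P_{v_i,v_j}$ for non-adjacent pairs, and the straightforward connectivity check through $v^*$ in the forward direction. Your treatment is in fact slightly more careful than the paper's in one place: you explicitly argue that the length bound forces the class index to advance by exactly one at each step and that consecutive edges $\{v_i,v_{i+1}\}$ come from $E(G)$, whereas the paper's clique argument implicitly relies on this (its ``$H$ contains $P_{u,v}$'' step only literally applies when $|i-j|\ge 2$).
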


\begin{proof}
    Suppose first that $G$ has a clique $C$ with $C \cap V_i = \{v_i\}$ for $1 \le i \le k$.
    Then, $P = \pathseq{s, v_1, v_2, \ldots, v_k, t}$ is an $s$-$t$ path of length $k + 1$ in $H$. 
    To see the connectivity of $H - V(P)$, it suffices to show that every vertex is reachable to $v^*$ in $H - V(P)$.
    By the construction of $H$, each vertex in $V(G) \setminus V(P)$ is adjacent to $v^*$ in $H - V(P)$.
    Each vertex $z$ in $V(H) \setminus (V(G) \cup \{v^*, p\})$ is either the internal vertex $w$ of $P_{u, v}$ for some $u, v \in V(G)$ or the pendant vertex adjacent to $w$.
    In both cases, at least one of $u$ and $v$ is not contained in $P$ as $V(P) \setminus \{s, t\}$ is a clique in $G$, implying that $z$ is reachable to $v^*$.
    
    Conversely, suppose that $H$ has a non-separating $s$-$t$ path $P$ of length at most $k + 1$ in $H$.
    By the assumption that $G$ has more than $k$ vertices, there is a vertex of $G$ that does not belong to $P$.
    Observe that $P$ does not contain any internal vertex $w$ of some $P_{u, v}$ as otherwise the pendant vertex adjacent to $w$ becomes an isolated vertex by deleting $V(P)$, which implies $H - V(P)$ has at least two connected components.
    Similarly, $P$ does not contain $v^*$.
    These facts imply that the internal vertices of $P$ belong to $V(G)$, and we have $|V(P) \cap V_i| = 1$ for $1 \le i \le k$.
    Let $C = V(P) \setminus \{s, t\}$.
    We claim that $C$ is a clique in $G$.
    Suppose otherwise.
    There is a pair of vertices $u, v \in C$ that are not adjacent in $G$.
    This implies that $H$ contains the path $P_{u, v}$.
    However, as $P$ contains both $u$ and $v$, the internal vertex of $P_{u, v}$ together with its pendant vertex forms a component in $H - V(P)$, yielding a contradiction that $P$ is a non-separating path in $H$.
\end{proof}

Thus, we have the following theorem.

\begin{theorem}\label{thm:snsp:w[1]}
    \textsc{Shortest Non-Separating Path} is W[1]-hard parameterized by $k$.
\end{theorem}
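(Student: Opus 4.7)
The plan is to conclude the theorem by combining the reduction machinery already set up with a standard parameterized-reduction argument from \textsc{Multicolored Clique}, which is W[1]-hard parameterized by the clique size~\cite{FellowsHRV:parameterized:2009}. First, I would observe that, given an instance $(G, \{V_1, \ldots, V_k\})$ of \textsc{Multicolored Clique}, the graph $H$ described above can be constructed in polynomial time: we introduce $s$, $t$, $v^*$, the pendant $p$, and, for each non-adjacent pair $u \in V_i, v \in V_j$ with $|i - j| \ge 2$, the length-2 path $P_{u,v}$ with its pendant. The total number of new vertices and edges is polynomial in $|V(G)|$.

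Next, I would verify that this is indeed an fpt-reduction. The output parameter is $k + 1$, which is a computable function of the input parameter $k$, so the reduction preserves the parameter. Together with polynomial runtime, this is precisely the definition of a parameterized reduction.

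Finally, the correctness of the reduction is given by \Cref{lem:complexity:W[1]}: $G$ contains a multicolored clique of size $k$ if and only if $H$ has a non-separating $s$-$t$ path of length at most $k + 1$. Chaining this with the W[1]-hardness of \textsc{Multicolored Clique} yields the desired W[1]-hardness of \textsc{Shortest Non-Separating Path} parameterized by $k$.

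At this point there is essentially no remaining obstacle; the heavy lifting is already done in \Cref{lem:complexity:W[1]}. The most delicate step of the argument lives inside that lemma, namely the direction showing that every non-separating $s$-$t$ path of length at most $k + 1$ must in fact pick exactly one vertex from each color class $V_i$ and consist only of original vertices of $G$ as internal vertices. This was enforced by the dangling pendants attached to each $P_{u,v}$ and to $v^*$, which guarantee that using any non-original internal vertex would create an isolated component upon removal and hence contradict non-separation.
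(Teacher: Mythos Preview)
Your proposal is correct and mirrors the paper's own argument: the paper sets up the reduction from \textsc{Multicolored Clique}, proves \Cref{lem:complexity:W[1]} for correctness, and then states the theorem as an immediate consequence, which is exactly what you spell out (polynomial-time construction, output parameter $k+1$, and \Cref{lem:complexity:W[1]}). There is nothing to add.
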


\subsection{Graphs with the diameter-treewidth property}

By~\Cref{thm:snsp:w[1]}, \textsc{Shortest Non-Separating Path} is unlikely to be fixed-parameter tractable on general graphs.
To overcome this intractability, we focus on sparse graph classes.
We first note that algorithmic meta-theorems for \textsc{FO Model Checking}~\cite{GroheK:Methods:2009,GroheKS:Deciding:2017} does not seem to be applied to \textsc{Shortest Non-Separating Path} as we need to care about the connectivity of graphs, while it can be expressed by a formula in MSO logic, which is as follows.
The property that vertex set $X$ forms a non-separating $s$-$t$ path can be expressed as:
\begin{align*}
    {\tt conn}(V \setminus X) \land {\tt hampath}(X, s, t),
\end{align*}
where ${\tt conn}(Y)$ and ${\tt hampath}(Y, s, t)$ are formulas in MSO${}_2$ that are true if and only if the subgraph induced by $Y$ is connected and has a Hamiltonian path between $s$ and $t$, respectively.
We omit the details of these formulas, which can be found in \cite{Cygan:Parameterized:2015} for example\footnote{In \cite{Cygan:Parameterized:2015}, they give an MSO${}_2$ sentence ${\tt hamiltonicity}$ expressing the property of having a Hamiltonian cycle, which can be easily transformed into a formula expressing ${\tt hampath}(X, s, t)$.}.
By Courcelle's theorem~\cite{Courcelle:Monadic:1990} and its extension due to Arnborg et al.~\cite{ArnborgLS:Easy:1991}, we can compute a shortest non-separating $s$-$t$ path in $O(f({\rm tw}(G))n)$ time, where $n$ is the number of vertices and ${\rm tw}(G)$ is the treewidth\footnote{We do not give the definition of treewidth and (the optimization version of) Courcelle's theorem. We refer to \cite{Cygan:Parameterized:2015} for details.} of $G$.
As there is an $O(\mathrm{tw}(G)^{\mathrm{tw}(G)^3}n)$-time algorithm for computing the treewidth of an input graph $G$~\cite{Bodlaender:Linear-time:1996}, we have the following theorem.

\begin{theorem}\label{thm:snsp-tw}
    \textsc{Shortest Non-Separating Path} is fixed-parameter tractable parameterized by the treewidth of input graphs.
\end{theorem}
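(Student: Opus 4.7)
The plan is to express the property ``$X\subseteq V(G)$ is the vertex set of a non-separating $s$-$t$ path'' by a single MSO${}_2$ formula and then invoke the optimization extension of Courcelle's theorem, using Bodlaender's algorithm to produce the required tree decomposition. Concretely, I would take
\[
\varphi(X) \;\equiv\; {\tt conn}(V\setminus X) \;\wedge\; {\tt hampath}(X,s,t),
\]
where ${\tt conn}(Y)$ asserts that $G[Y]$ is connected and ${\tt hampath}(X,s,t)$ asserts that $G[X]$ has a Hamiltonian path from $s$ to $t$. Connectivity is a textbook MSO${}_2$ formula, and ${\tt hampath}(X,s,t)$ is a minor variant of the MSO${}_2$ sentence ${\tt hamiltonicity}$ recalled in~\cite{Cygan:Parameterized:2015}.

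To spell out ${\tt hampath}(X,s,t)$, I would existentially quantify an edge subset $H\subseteq E(G[X])$ and require (i) the subgraph $(X,H)$ is connected (via the usual MSO${}_2$ connectivity predicate restricted to $X$ and $H$), (ii) every vertex of $X\setminus\{s,t\}$ is incident in $H$ to exactly two edges, and (iii) each of $s$ and $t$ is incident in $H$ to exactly one edge. These three conditions together force $(X,H)$ to be a Hamiltonian $s$-$t$ path of $G[X]$, and the trivial conjunct $s,t\in X$ can be added as well.

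Given $\varphi(X)$, the optimization version of Courcelle's theorem due to Arnborg, Lagergren, and Seese~\cite{ArnborgLS:Easy:1991} computes, from a tree decomposition of $G$ of width $w$ and in time $f(w)\cdot n$, a minimum-cardinality $X$ satisfying $\varphi$, or reports that no such $X$ exists; comparing $|X|-1$ with $k$ then decides the instance. A tree decomposition of width $\mathrm{tw}(G)$ is produced by Bodlaender's algorithm~\cite{Bodlaender:Linear-time:1996} in $2^{O(\mathrm{tw}(G)^3)}\cdot n$ time, so the combined procedure runs in $h(\mathrm{tw}(G))\cdot n$ time for some computable function $h$, which is precisely fixed-parameter tractability parameterized by treewidth. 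I do not expect any genuine obstacle, since every ingredient is classical; the only mildly delicate point is writing ${\tt hampath}(X,s,t)$ correctly, which is handled as sketched above.
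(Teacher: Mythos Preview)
Your proposal is correct and essentially identical to the paper's own argument: the paper likewise expresses ``$X$ is the vertex set of a non-separating $s$-$t$ path'' as ${\tt conn}(V\setminus X)\wedge{\tt hampath}(X,s,t)$, invokes the Arnborg--Lagergren--Seese optimization extension of Courcelle's theorem, and uses Bodlaender's algorithm to obtain the tree decomposition. Your additional detailing of ${\tt hampath}(X,s,t)$ is sound and merely makes explicit what the paper defers to~\cite{Cygan:Parameterized:2015}.
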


A class $\mathcal C$ of graphs is \emph{minor-closed} if every minor of a graph $G \in \mathcal C$ also belongs to $\mathcal C$.
We say that $\mathcal C$ has the \emph{diameter-treewidth property} if there is a function $f\colon \mathbb N \to \mathbb N$ such that for every $G \in \mathcal C$, the treewidth of $G$ is upper bounded by $f(\diam(G))$, where $\diam(G)$ is the diameter of $G$.
It is well known that every planar graph $G$ has treewidth at most $3\cdot\diam(G) + 1$~\cite{RobertsonS:GM3:1984}\footnote{More precisely, the treewidth of a planar graph is upper bounded by $3r + 1$, where $r$ is the radius of the graph.}, which implies that the class of planar graphs has the diameter-treewidth property.
This can be generalized to more wider classes of graphs.
A graph is called an \emph{apex graph} if it has a vertex such that removing it makes the graph planar.

\begin{theorem}[\cite{DemaineH:Diameter:2004,Eppstein:Diameter:2000}]
    Let $\mathcal C$ be a minor-closed class of graphs.
    Then, $\mathcal C$ has the diameter-treewidth property if and only if it excludes some apex graph.
\end{theorem}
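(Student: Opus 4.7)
The plan is to prove the two directions separately, the forward one by exhibiting an explicit family of witnesses, and the backward one by appealing to the Graph Minors Structure Theorem of Robertson and Seymour.

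For the forward direction, I would argue by contrapositive. Suppose $\mathcal C$ excludes no apex graph; since $\mathcal C$ is minor-closed this means every apex graph lies in $\mathcal C$. As a witness family, consider the \emph{apex grid} $A_k$ obtained by taking the $k \times k$ grid and adding one extra vertex adjacent to every grid vertex. Deleting this extra vertex leaves a planar graph, so $A_k$ is apex and hence $A_k \in \mathcal C$. Now $\diam(A_k) = 2$ via the universal vertex, whereas $A_k$ contains the $k \times k$ grid as a subgraph and so has treewidth at least $k$. Thus no function of the diameter can bound the treewidth uniformly over $\{A_k\}_{k \ge 1}$, contradicting the diameter-treewidth property.

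For the backward direction, fix an apex graph $H \notin \mathcal C$ and let $G \in \mathcal C$. The key step is to apply the Graph Minors Structure Theorem: because $\mathcal C$ excludes $H$, the graph $G$ admits a tree decomposition whose torsos are \emph{almost embeddable} in a surface of bounded Euler genus, with a bounded number of apex vertices and bounded-width vortices, where all constants depend only on $H$. The heart of the argument, following Eppstein, is to bound the treewidth of a single almost-embeddable piece by a function of its diameter. In the planar case this is exactly the layering bound $\mathrm{tw}(G) \le 3 \cdot \diam(G) + 1$; the general surface case is proved by a BFS-layering argument where one partitions the piece into distance classes from a chosen vertex and exploits the fact that a bounded number of consecutive layers in a bounded-genus graph has bounded treewidth.

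The main obstacle will be the apex and vortex corrections. A single apex vertex can collapse distances inside a piece and destroy any naive layering argument, so one must carefully track distances in $G$ after deleting the apex vertices of the torso rather than inside the almost-embedded skeleton, using that there are only a bounded number of them. Once each torso has treewidth bounded by some $g(\diam(G), H)$, one assembles a tree decomposition of $G$ by gluing the tree decompositions of the torsos along the tree decomposition produced by the structure theorem; because adjacent torsos intersect only in bounded-size adhesion sets, the width of the resulting global decomposition remains bounded by a function of $\diam(G)$ and $H$ alone, delivering the diameter-treewidth property.
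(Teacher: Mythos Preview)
The paper does not prove this theorem; it is quoted verbatim from the cited references of Eppstein and Demaine--Hajiaghayi and used as a black box. There is therefore no ``paper's own proof'' to compare against.

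On the substance of your sketch: the forward direction via the apex-grid family $A_k$ is correct and is exactly Eppstein's argument. For the backward direction you invoke the Graph Minors Structure Theorem and then try to bound the treewidth of each almost-embeddable torso by a function of $\diam(G)$. The difficulty you flag---that apex vertices inside a torso collapse distances---is not a side issue but the whole point, and your proposed fix (``track distances in $G$ after deleting the apex vertices'') does not work as stated: removing even one vertex from a graph of diameter~$2$ can produce a graph of unbounded diameter, so the BFS layering you want to run has no a~priori depth bound. What actually makes the argument go through is that excluding an \emph{apex} minor $H$ forces a stronger structure: because every apex grid $A_k$ contains $H$ as a minor for large $k$, one can work instead with $A_k$-minor-freeness, and Robertson--Seymour's structural results for apex-minor-free classes yield torsos whose surface part has bounded local treewidth without needing the apex correction in the way the general theorem does. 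Your outline is in the right spirit, but the step from ``bounded number of apices'' to ``treewidth bounded in $\diam(G)$'' hides the one place where the hypothesis ``$H$ is apex'' is genuinely used, and you have not supplied that step.
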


For $C \subseteq V(G)$ that induces a connected subgraph $G[C]$, we denote by $G_C$ the graph obtained from $G$ by contracting all edges in $G[C]$ and by $v_C$ the vertex corresponding to $C$ in $G_C$.
Since $G[C]$ is connected, vertex $v_C$ is well-defined.

\begin{lemma}\label{lem:snsp:comp}
    Let $C \subseteq V(G)$ be a vertex subset such that $G[C]$ is connected.
    Let $P$ be an $s$-$t$ path in $G$ with $V(P) \cap C = \emptyset$.
    Then, $P$ is non-separating in $G$ if and only if it is non-separating in $G_C$.
\end{lemma}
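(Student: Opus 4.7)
The plan is to reduce both directions of the equivalence to a single structural observation: because $V(P) \cap C = \emptyset$, the contraction map naturally restricts to a surjective graph homomorphism
\[
\varphi \colon G - V(P) \;\to\; G_C - V(P),
\]
sending each $c \in C$ to $v_C$ and acting as the identity on every other vertex. Every edge of $G - V(P)$ either lies outside $C$ (and is preserved by $\varphi$) or joins some $c \in C$ with a vertex $x \notin C \cup V(P)$ (and becomes the edge $\{v_C, x\}$ of $G_C - V(P)$) or lies inside $C$ (and collapses to a loop at $v_C$). In all non-loop cases, $\varphi$ carries edges to edges, so it sends walks to walks. Moreover, by hypothesis, the fibre over $v_C$ is exactly $C$, which induces a connected subgraph of $G - V(P)$, while every other fibre is a single vertex.

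For the forward direction, assume $G - V(P)$ is connected and take two vertices $u,w \in V(G_C)\setminus V(P)$. Choose preimages $\tilde u, \tilde w \in V(G) \setminus V(P)$ under $\varphi$ (taking any element of $C$ if the target is $v_C$; this is possible since $C \cap V(P) = \emptyset$ so $C \subseteq V(G)\setminus V(P)$). A $\tilde u$-$\tilde w$ walk in $G - V(P)$ projects via $\varphi$ to a $u$-$w$ walk in $G_C - V(P)$, so $G_C - V(P)$ is connected.

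For the converse, assume $G_C - V(P)$ is connected and take $u, w \in V(G) \setminus V(P)$. The images $\varphi(u), \varphi(w)$ are joined by a walk $v_0, v_1, \ldots, v_\ell$ in $G_C - V(P)$. I would lift this walk edge-by-edge: for each edge $\{v_{i-1}, v_i\}$ of $G_C$, pick a witnessing edge of $G$ (if $v_{i-1},v_i \neq v_C$ this is the edge itself; if one endpoint equals $v_C$ the witness is an edge $\{c, x\}$ with $c \in C$). Stitching these witnesses together gives a sequence of edges in $G - V(P)$ whose consecutive endpoints either coincide or both lie in $C$; in the latter case I bridge them using a path inside the connected subgraph $G[C] \subseteq G - V(P)$. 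At the two ends I similarly connect $u$ to its chosen witness if $u \in C$, and likewise for $w$. The result is a $u$-$w$ walk in $G - V(P)$, proving connectedness.

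The only mildly delicate point is the bookkeeping when several consecutive edges of the lifted walk touch $C$, since the chosen $C$-endpoints of successive witnesses need not agree; but the connectivity of $G[C]$ (preserved after removing $V(P)$ because $V(P)\cap C = \emptyset$) lets us patch between them, and this is the main technical step. Everything else is a direct translation between walks through the contraction map.
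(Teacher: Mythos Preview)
Your proof is correct and follows essentially the same approach as the paper: both directions rely on pushing walks forward through the contraction map and lifting walks back by choosing edge witnesses and bridging through the connected subgraph $G[C]$. The only difference is packaging --- you phrase the argument via a surjective graph homomorphism $\varphi$ with connected fibres and work with walks, whereas the paper works directly with explicit paths and does a case analysis on whether the path meets $C$ (or $v_C$); the technical content is identical.
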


\begin{proof}
    Suppose first that $P$ is non-separating in $G$.
    Let $u, v \in V(G) \setminus V(P)$ be arbitrary.
    As $P$ is non-separating, there is a $u$-$v$ path $P'$ in $G - V(P)$.
    Let $u'$ be the vertex of $G_C$ such that $u' = u$ if $u \notin C$ and $u' = v_C$ if $u \in C$.
    Let $v'$ be the vertex defined analogously. 
    We show that there is a $u'$-$v'$ path in $G_C - V(P)$ as well.
    If $P'$ does not contain any vertex in $C$, it is also a $u'$-$v'$ path in $G_C$, and hence we are done.
    Suppose otherwise.
    Let $x$ and $y$ be the vertices in $V(P') \cap C$ that are closest to $u$ and $v$, respectively.
    Note that $x$ and $y$ can be the end vertices of $P'$, that is, $C$ may contain $u$ and $v$.
    Let $P_{u,x}$ and (resp.\ $P_{y,v}$) be the subpath of $P'$ between $u$ and $x$ (resp.\ $y$ and $v$).
    Then, the sequence of vertices obtained by concatenating $P_{u, x}$ after $P_{y, v} - \{y\}$ and replacing exactly one occurrence of a vertex in $C$ with $v_C$ forms a path between $u'$ and $v'$.
    Since we choose $u, v$ arbitrarily, there is a path between any pair of vertices in $G_C - V(P)$ as well. 
    Hence, $P$ is non-separating in $G_C$.
    
    Conversely, suppose that $P$ is non-separating in $G_C$.
    For $u, v \in V(G_C) \setminus V(P)$, there is a path $P'$ in $G_C - V(P)$.
    Suppose that neither $u = v_C$ nor $v = v_C$.
    Then, we can construct a $u$-$v$ path in $G - V(P)$ as follows.
    If $v_C \notin V(P')$, $P'$ is also a path in $G - V(P)$ and hence we are done.
    Otherwise, $v_C \in V(P')$.
    Let $P_u$ and $P_v$ be the subpaths in $P' - \{v_C\}$ containing $u$ and $v$, respectively.
    From $P_u$ and $P_v$, we have a $u$-$v$ path in $G$ by connecting them with an arbitrary path in $G[C]$ between the end vertices other than $u$ and $v$.
    Note that such a bridging path in $G[C]$ always exists since $G[C]$ is connected.
    Moreover, as $V(P') \cap C = \emptyset$ and $V(P) \cap C = \emptyset$, this is also a $u$-$v$ path in $G - V(P)$.
    Suppose otherwise that either $u = v_C$ or $v = v_C$, say $u = v_C$.
    In this case, we can construct a path between every vertex $w$ in $C$ and $v$ by concatenating $P'$ and an arbitrary path in $G[C]$ between $w$ and the end vertex of the subpath $P' - \{v_C\}$ other than $v$.
    Therefore, $P$ is non-separating in $G$.
\end{proof}

Now, we are ready to state the main result of this subsection.

\begin{theorem}\label{thm:snsp-dt}
    Suppose that a minor-closed class $\mathcal C$ of graphs has the diameter-treewidth property.
    Then, \textsc{Shortest Non-Separating Path} on $\mathcal C$ is fixed-parameter tractable parameterized by $k$.
\end{theorem}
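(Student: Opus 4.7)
The plan is to reduce to the bounded-treewidth case handled by Theorem~\ref{thm:snsp-tw}. The key observation is that any $s$-$t$ path of length at most $k$ is contained in the ball of radius $k$ around $s$, so by Lemma~\ref{lem:snsp:comp} I may contract every connected component lying outside that ball without changing whether such a path is non-separating. The contracted graph still belongs to $\mathcal{C}$ (as a minor of $G$) and has diameter bounded by a linear function of $k$, so the diameter-treewidth property of $\mathcal{C}$ provides a treewidth bound depending only on $k$.

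Concretely, I would first compute $B = \{v \in V(G) : \dist_G(s,v) \le k\}$ by BFS, returning NO immediately if $t \notin B$. Let $C_1, \ldots, C_r$ be the connected components of $G - B$; each $C_i$ induces a connected subgraph of $G$ and, since $G$ is connected, has at least one edge to $B$. Form $G'$ from $G$ by contracting each $C_i$ to a single vertex $v_i$. Then $G' \in \mathcal{C}$, and for every $v \in B$ we have $\dist_{G'}(s,v) \le \dist_G(s,v) \le k$ (contraction only shortens distances), while each contracted vertex satisfies $\dist_{G'}(s, v_i) \le k+1$. Hence $\diam(G') \le 2k+2$, and the diameter-treewidth property yields $\mathrm{tw}(G') \le f(2k+2)$ for the function $f$ associated with $\mathcal{C}$.

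For correctness, consider any $s$-$t$ path $P$ in $G$ of length at most $k$. All vertices of $P$ lie in $B$, so $V(P) \cap C_i = \emptyset$ for every $i$. Iteratively applying Lemma~\ref{lem:snsp:comp} (first to $C_1$ in $G$, then to $C_2$ in $G_{C_1}$, and so on, using that each remaining $C_j$ stays connected and disjoint from $V(P)$ under the previous contractions) shows that $P$ is non-separating in $G$ if and only if it is non-separating in $G'$. Thus it suffices to search $G'$ for a non-separating $s$-$t$ path of length at most $k$ that uses only original vertices, i.e., avoids the contracted vertices $v_1, \ldots, v_r$.

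The main obstacle is that the algorithm of Theorem~\ref{thm:snsp-tw} applied directly to $G'$ might return a path that traverses one of the $v_i$, which need not correspond to any path of length at most $k$ in $G$. I would handle this by extending the MSO${}_2$ formula used in the proof of Theorem~\ref{thm:snsp-tw}, namely ${\tt conn}(V \setminus X) \land {\tt hampath}(X, s, t)$, with an extra conjunct forbidding $X$ from meeting the label $L = \{v_1, \ldots, v_r\}$. Invoking Courcelle's theorem on the labelled graph $G'$, whose treewidth is at most $f(2k+2)$, then decides the existence of a non-separating $s$-$t$ path of length at most $k$ in $G$ in time $g(k) \cdot |V(G)|^{O(1)}$, completing the proof.
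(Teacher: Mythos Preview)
Your proof is correct and follows the same strategy as the paper: contract each component outside the radius-$k$ ball $B$ around $s$, bound the diameter of the resulting minor $G'$ by $2k+2$, and then apply Theorem~\ref{thm:snsp-tw}. The paper omits your label constraint and simply asserts that $G$ has a non-separating $s$-$t$ path of length at most $k$ if and only if $G'$ does; this is in fact justified (though the paper does not spell it out) because every neighbor of a contracted vertex $v_{C_i}$ in $G'$ is a vertex of $B$ at $G'$-distance exactly $k$ from $s$, so no $s$-$t$ path of length at most $k$ in $G'$ can pass through any $v_{C_i}$, making your extra avoidance clause unnecessary but harmless.
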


\begin{proof}
    Let $G \in \mathcal C$.
    We first compute $B = \{v \in V(G) : \dist(s, v) \le k\}$.
    This can be done in linear time.
    If $t \notin B$, then the instance $(G, s, t, k)$ is trivially infeasible.
    Suppose otherwise that $t \in B$.
    Let $C$ be a component in $G - B$.
    By definition, every non-separating $s$-$t$ path $P$ of length at most $k$ does not contain any vertex of $C$.
    Let $G'$ be the graph obtained from $G$ by contracting all edges in $E(G - B)$.
    For each component $C$ in $G - B$, we denote by $v_C$ the vertex of $G'$ corresponding to $C$ (i.e., $v_C$ is the vertex obtained by contracting all edges in $C$).
    Then, we have $\diam(G') \le 2k + 2$ as $\diam(G[B]) \le k$ and every vertex in $V(G') \setminus B$ is adjacent to a vertex in $B$.
    By~\Cref{lem:snsp:comp}, $G$ has a non-separating $s$-$t$ path of length at most $k$ if and only if so does $G'$.
    Since $\mathcal C$ is minor-closed, we have $G' \in \mathcal C$ and hence the treewidth of $G'$ is upper bounded by $f(2k + 2)$ for some function $f$.
    By~\Cref{thm:snsp-tw}, we can check whether $G'$ has a non-separating $s$-$t$ path of length at most $k$ in $O(g(k)|V(G')|)$ time for some function $g$.
\end{proof}

\subsection{Chordal graphs with $k = \dist(s, t)$}

In \Cref{ssec:graph-classes}, we have seen that \textsc{Shortest Non-Separating Path} is NP-complete even on split graphs (and thus on more general chordal graphs as well).
To overcome this intractability, we restrict ourselves to finding a non-separating $s$-$t$ path of length $\dist(s, t)$ on chordal graphs.

A graph $G$ is \emph{choral} if it has no cycles of length at least~$4$ as an induced subgraph. 
In the following, fix a connected chordal graph $G$.

\begin{lemma}
\label{lem:induced-path-in-connected-sets}
Let $S \subseteq V(G)$ be a vertex set such that $G[S]$ is connected.
For $u, v \in S$, every induced $u$-$v$ path $P$ in $G$ satisfies that $V(P) \subseteq N[S]$.
\end{lemma}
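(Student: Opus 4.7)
The plan is to assume, for contradiction, that some vertex $w$ of $P$ lies outside $N[S]$, and to exhibit an induced cycle of length at least $4$ in $G$, contradicting chordality. Since $u, v \in S \subseteq N[S]$, such a $w$ must be an internal vertex of $P$; let $x, y$ be its two neighbors on $P$. Because $P$ is induced in $G$, the only neighbors of $w$ on $V(P)$ are $x$ and $y$; because $w \notin N[S]$, $w$ has no neighbor in $S$. Hence, in the induced subgraph $H = G[V(P) \cup S]$, the vertex $w$ has degree exactly $2$ with $N_H(w) = \{x,y\}$.

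Next, I would produce a cycle in $H$ through $w$. Walking from $x$ back to $u$ along the initial subpath of $P$, then from $u$ to $v$ inside $G[S]$ (which is connected by hypothesis), and then from $v$ to $y$ along the terminal subpath of $P$ gives an $x$-$y$ walk in $H - w$ (note that $w$ itself lies on none of these pieces, since $w \notin S$ and $w \notin \{x,y\}$). Such a walk contains an $x$-$y$ path; together with the edges $xw$ and $wy$ this yields a cycle through $w$ in $H$. Let $C$ be a shortest such cycle.

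The key step is to show that $C$ is an induced cycle in $G$ of length at least~$4$. Any chord of $C$ in $H$ would split $C$ into two strictly shorter cycles, one of which still passes through $w$, contradicting the minimality of $C$; therefore $C$ is an induced cycle of $H$. Since $H$ is an induced subgraph of $G$, the cycle $C$ is also induced in $G$. Moreover, $C$ cannot be a triangle, for a triangle through $w$ would require the edge $xy$ in $G$, contradicting that $P$ is induced (as $x$ and $y$ are at distance $2$ on $P$). Thus $C$ has length at least $4$ and is induced in $G$, contradicting chordality of $G$.

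The main obstacle, as far as I can tell, is just keeping the auxiliary subgraph $H$ and its relationship to $G$ straight: one must be careful that an induced cycle in the induced subgraph $H$ remains induced in $G$ (which it does, precisely because $H$ is induced), and that the standard chord-splitting argument really does preserve $w$ in one of the two shorter cycles. Once those bookkeeping points are in place, the rest is a direct appeal to the definition of chordality.
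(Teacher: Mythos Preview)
Your proof is correct. Both your argument and the paper's proceed by contradiction and exhibit an induced cycle of length at least~$4$ through a vertex outside $N[S]$, but the mechanics differ. The paper takes the maximal subpath $Q$ of $P$ whose interior lies in $V(G)\setminus N[S]$, with endpoints $a,b\in N(S)$, and connects $a$ to $b$ by an induced path $R$ with all internal vertices in $S$; it then checks directly that $G[V(Q)\cup V(R)]$ is an induced cycle (no chords are possible because interior vertices of $Q$ have no neighbours in $S$, and $a,b$ are nonadjacent as $P$ is induced). You instead pick a single offending vertex $w$, observe that it has degree~$2$ in the induced subgraph $H=G[V(P)\cup S]$, and invoke the shortest-cycle-through-$w$ trick together with chord-splitting. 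The paper's construction is a bit more explicit and avoids the auxiliary subgraph; your route is arguably more robust, since the degree-$2$/minimality device dispenses with any case analysis about where chords could sit. One small point worth making explicit in your write-up: name which of $x,y$ lies on the $u$-side of $P$ so that the described $x$--$y$ walk really avoids $w$.
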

\begin{proof}
Suppose to the contrary that an induced $u$-$v$ path $P$ contains a vertex $x \notin N[S]$.
Since $P$ starts and ends in $S$,
it contains a subpath $Q = \pathseq{a, \dots, x, \dots, b}$ such that
$a, b \in N(S)$ and all other vertices in $Q$ belong to $V - N[S]$.
As $a,b \in N(S)$ and $G[S]$ is connected,
$G$ has an induced $a$-$b$ path $R$ with all internal vertices belonging to $S$.
Since the internal vertices of $Q$ have no neighbors in $S$,
$Q \cup R$ induces a cycle.
Both $a$-$b$ paths $Q$ and $R$ have length at least $2$ as $a$ and $b$ are not adjacent,
and thus the cycle $G[Q \cup R]$ has length at least~$4$.
This contradicts that $G$ is chordal.
\end{proof}

For $u, v \in V(G)$, a set of vertices $S \subseteq V(G) \setminus \{u, v\}$ is called a \emph{$u$-$v$ separator} of $G$ if there is no $u$-$v$ path in $G - S$.
An inclusion-wise minimal $u$-$v$ separator of $G$ is called a \emph{minimal $u$-$v$ separator}.
A \emph{minimal separator} of $G$ is a minimal $u$-$v$ separator for some $u, v \in G$.
Dirac's well-know characterization~\cite{Dirac:rigid:1961} of chordal graphs states that a graph is chordal if and only of every minimal separator induces a clique.

\begin{lemma}
\label{lem:shortest-path-sep}
Let $s,t \in V(G)$ be such that  $\{s,t\} \notin E(G)$.
If $v \in V(G) \setminus \{s,t\}$ is an internal vertex of a shortest $s$-$t$ path $P$,
then $N[v] \setminus \{s,t\}$ is an $s$-$t$ separator of $G$.
\end{lemma}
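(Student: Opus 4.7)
The plan is to argue by contradiction. Suppose $N[v]\setminus\{s,t\}$ is not an $s$-$t$ separator; then $G$ admits an induced $s$-$t$ path $Q = \pathseq{a_1=s, a_2, \ldots, a_m=t}$ avoiding $N[v]\setminus\{s,t\}$. Applying \Cref{lem:induced-path-in-connected-sets} with the connected set $S = V(P)$, every vertex of $Q$ lies in $N_G[V(P)]$; moreover every internal $a_l$ satisfies $a_l \notin N[v]$.

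For each $x \in V(Q)$ I would set $J(x) = \{j : x = u_j \text{ or } x \sim u_j\}$, which is non-empty by the above. Chordality applied to any cycle $\pathseq{x, u_j, u_{j+1}, \ldots, u_{j'}, x}$ (with $x\sim u_j, u_{j'}$, $j'>j$) forces a chord, and since chords $u_r u_{r'}$ with $r'-r \geq 2$ would shortcut $P$, the chord must be $x \sim u_r$ for some intermediate $r$; iterating shows $J(x)$ is an interval. The bound $j' - j \leq 2$ follows from shortestness: the walk $s \to u_j \to x \to u_{j'} \to t$ has length $k + 2 - (j' - j)$, which must be at least $\dist(s,t) = k$. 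For internal $a_l$, since $i \notin J(a_l)$, its interval lies entirely in $\{0, \ldots, i-1\}$ (call $a_l$ \emph{left}) or in $\{i+1, \ldots, k\}$ (\emph{right}).

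In the generic case $2 \leq i \leq k-2$, $J(s) = \{0,1\}$ is left and $J(t) = \{k-1, k\}$ is right, so some consecutive pair $a_{l^*-1}, a_{l^*}$ on $Q$ is respectively left and right. Let $j_1 = \max J(a_{l^*-1}) \leq i-1$ and $j_2 = \min J(a_{l^*}) \geq i+1$. Then $a_{l^*-1} \neq u_{j_1}$ (otherwise $u_{j_1} \sim a_{l^*}$ forces $j_1 \in J(a_{l^*})$, contradicting $j_1 \leq i-1$ and $J(a_{l^*}) \subseteq \{i+1, \ldots, k\}$), and symmetrically $a_{l^*} \neq u_{j_2}$. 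Now $C := \pathseq{a_{l^*-1}, u_{j_1}, u_{j_1+1}, \ldots, u_{j_2}, a_{l^*}, a_{l^*-1}}$ is a cycle of length $j_2 - j_1 + 3 \geq 5$ with distinct vertices. Any chord of $C$ is forbidden: $u_r u_{r'}$ chords (with $r'-r \geq 2$) by shortestness of $P$, $a_{l^*-1} u_r$ chords (with $r > j_1$) by maximality of $j_1$ in $J(a_{l^*-1})$, and $a_{l^*} u_r$ chords (with $r < j_2$) by minimality of $j_2$ in $J(a_{l^*})$. Thus $C$ is a chordless cycle of length $\geq 5$, contradicting chordality of $G$.

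The main obstacle will be the boundary case $i \in \{1, k-1\}$, where the endpoint classification fails (e.g., $J(s) \ni 1 = i$ when $i=1$). For $i = 1$ I would instead observe that $a_2$ is internal and $s$-adjacent, so $0 \in J(a_2)$ and $1 = i \notin J(a_2)$; the interval property then forces $J(a_2) = \{0\}$, making $a_2$ left. Similarly $a_{m-1}$ is right, so the transition $l^*$ occurs strictly among internal vertices, and the cycle argument goes through unchanged (now possibly with $u_{j_1} = s$ as a cycle vertex, which causes no issue since $s \neq a_{l^*-1}$ when $l^*-1 \geq 2$). The case $i = k-1$ is handled symmetrically, completing the proof.
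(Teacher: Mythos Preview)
Your proof is correct and follows the same overall strategy as the paper: assume an induced $s$-$t$ path $Q$ avoids $N[v]\setminus\{s,t\}$, confine $Q$ to $N[V(P)]$ via \Cref{lem:induced-path-in-connected-sets}, locate the place where $Q$ ``jumps over'' position $i$ along $P$, and extract a forbidden induced cycle.

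The packaging differs. The paper picks a consecutive pair on $Q$ lying in $N[u_i]$ and $N[u_k]$ with $i<j<k$ and then does a case analysis on whether either vertex equals the corresponding $u_{\bullet}$ and on the value of $k-i\in\{2,3\}$, using the layers $D_j$ to force contradictions. Your interval sets $J(x)$ and the choice $j_1=\max J(a_{l^*-1})$, $j_2=\min J(a_{l^*})$ make the extremality explicit, so you obtain a single chordless cycle of length $\ge 5$ in one shot without any subcases; the price you pay is a separate treatment of $i\in\{1,k-1\}$ because the endpoints $s,t$ may have $i\in J(\cdot)$, whereas the paper's weaker requirement (merely that \emph{some} index in the neighborhood set is below or above $j$) absorbs the boundary automatically. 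Both arguments are of comparable length; yours trades case analysis on the jump for case analysis on the position of $v$.
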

\begin{proof}
Let $d = \dist(s,t)$.
For $0 \le i \le d$, let 
\begin{align*}
    D_{i} = \{v \in V(G) : \dist(s,v) = i \land \dist(v,t) = d-i\}
\end{align*}
and $V(P) \cap D_{i} = \{u_{i}\}$. Let $j$ ($0 < j < d$) be the index such that $v = u_{j}$.

Suppose to the contrary that there is an induced $s$-$t$ path $Q$
such that $V(Q) \cap (N[u_{j}] \setminus \{s,t\}) = \emptyset$.
By \Cref{lem:induced-path-in-connected-sets},
$V(Q) \subseteq N[V(P)] = \bigcup_{0 \le i \le d} N[u_{i}]$ holds.
Since $Q$ starts in $N[u_{0}]$ and ends in $N[u_{d}]$,
there are indices $i$ and $k$ with $0 \le i < j < k \le d$ such that
$Q$ consecutively visits a vertex $v_{i} \in N[u_{i}]$ 
and then a vertex $v_{k} \in N[u_{k}]$ in this order.
Since $\dist(u_{i}, u_{k}) = k-i \ge 2$ and $\{v_{i}, v_{k}\} \in E$,
at least one of $v_{i} \ne u_{i}$ and $v_{k} \ne u_{k}$ holds.
By symmetry, we assume that $v_{i} \ne u_{i}$.

If $v_{k} = u_{k}$, then $v_{i} \in N(u_{i}) \cap N(u_{k})$.
In this case, we have $i=j-1$ and $k=j+1$
since otherwise $P$ admits a shortcut using the subpath $\pathseq{u_{i}, v_{i}, u_{k}}$.
This implies that 
$\dist(s, v_{i}) \le \dist(s, u_{i}) + 1 = i+1 = j$ and 
$\dist(v_{i}, t) \le 1 + \dist(v_{k}, t) = 1 + \dist(u_{k}, t) = 1+(d-k) = d-j$.
Since $\dist(s, v_{i}) + \dist(v_{i}, t) \ge d$,
we have $\dist(s, v_{i}) = j$ and $\dist(v_{i}, t) = d-j$.
This implies that $v_{i} \in D_{j} \subseteq N[u_{j}] \setminus \{s,t\}$, a contradiction

Next we consider the case $v_{k} \ne u_{k}$. Recall that we also have $v_{i} \ne u_{i}$ as an assumption.
In this case, we have $k-i \le 3$ as $\pathseq{u_{i}, v_{i}, v_{k}, u_{k}}$ is not a shortcut for $P$.
Assume first that $k-i = 3$. By symmetry, we may assume that $i = j-1$ and $k = j+2$.
Since $\dist(s,v_{i}) \le \dist(s,u_{i}) +1 = j$ and 
$\dist(v_{i},t) \le 2 + \dist(u_{k},t) \le 2 + (d-k) = d-j$,
we have $v_{i} \in D_{j} \subseteq N[u_{j}] \setminus \{s,t\}$, a contradiction.
Next assume that $k-i=2$. That is, $i = j-1$ and $k = i+1$.
Since $v_{i}, v_{k} \notin N[u_{j}] \setminus \{s,t\}$
and $P$ is shortest,
the vertices $v_{i}$, $u_{i}$, $u_{j}$, $u_{k}$, $v_{k}$
are distinct and form a cycle of length $5$.
Observe that $v_{i} \notin \{s,t\}$ since otherwise 
$\pathseq{v_{i} = s, v_{k}, u_{k}}$ or $\pathseq{u_{i}, v_{i} = t}$ is a shortcut.
Similarly, $v_{k} \notin \{s,t\}$.
Hence, $v_{i}, v_{k} \notin N[u_{j}]$.
Therefore, the possible chords for the cycle $\pathseq{v_{i}, u_{i}, u_{j}, u_{k}, v_{k}}$
 are $\{u_{i}, v_{k}\}$ and $\{u_{k}, v_{i}\}$.
In any combination of them, the graph has an induced cycle of length at least $4$.
\end{proof}

Let $d$ and $D_i$ be defined as in the proof of \Cref{lem:shortest-path-sep}, and let $D = \bigcup_{0 \le i \le d} D_{i}$.
Note that each $D_{i}$ is a clique: if $i \in \{0,d\}$, then it is a singleton;
otherwise, it is a minimal $s$-$t$ separator of the chordal graph $G[D]$.
Observe that if $|D_{i}| = 1$ for all $0 \le i \le d$,
then $G$ contains a unique shortest $s$-$t$ path, and thus the problem is trivial.
Otherwise, we define $\ell$ to be the minimum index such that $|D_{\ell}| > 1$ and
$r$ to be the maximum index such that $|D_{r}| > 1$.
Since $|D_{0}| = |D_{d}| = 1$, we have $0 < \ell \le r < d$.

Our algorithm works as follows.
\begin{enumerate}
  \item If $G$ contains a unique shortest $s$-$t$ path $P$,
  then test if $P$ is non-separating.
  
  \item Otherwise, find a shortest $s$-$t$ path $P$ satisfying the following conditions.
  \begin{enumerate}
    \item $V(P)$ does not contain a minimal $a$-$b$ separator for $a \in D_{\ell}$ and $b \in V \setminus D$.
    \label{itm:D-W}

    \item $V(P)$ does not contain a minimal $a$-$b$ separator for $a \in D_{\ell}$ and $b \in D_{r}$.
    \label{itm:ell-r}
  \end{enumerate}
\end{enumerate}

\begin{lemma}
The algorithm is correct.
\end{lemma}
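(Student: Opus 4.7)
The plan is to verify correctness of each of the two cases of the algorithm. Case~1, where $G$ has a unique shortest $s$-$t$ path $P$, is immediate: the algorithm simply tests whether this $P$ is non-separating. For Case~2, I will prove the following characterization: a shortest $s$-$t$ path $P$ is non-separating in $G$ if and only if it satisfies both conditions~(\ref{itm:D-W}) and~(\ref{itm:ell-r}). Given this equivalence, the existence of some non-separating shortest $s$-$t$ path is equivalent to the existence of a shortest $s$-$t$ path satisfying (\ref{itm:D-W}) and (\ref{itm:ell-r}), which is precisely what Step~2 searches for.

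For the ``only if'' direction, suppose $P$ is non-separating, so $G - V(P)$ is connected. Then for any $a \in D_\ell \setminus V(P)$ and any $b \in V(G) \setminus V(P)$, the vertices $a$ and $b$ lie in the same component of $G - V(P)$, and hence $V(P)$ cannot contain a minimal $a$-$b$ separator. Specializing $b$ to $V \setminus D$ and to $D_r$ yields conditions~(\ref{itm:D-W}) and~(\ref{itm:ell-r}), respectively.

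For the harder ``if'' direction, I fix a hub vertex $a \in D_\ell \setminus V(P)$, which exists in Case~2 because $|D_\ell| > 1$. I then show every vertex $x \in V(G) \setminus V(P)$ lies in the same component of $G - V(P)$ as $a$, splitting into cases: vertices in $D_\ell \setminus V(P)$ connect to $a$ directly via the clique $D_\ell$; vertices in $V \setminus D$ connect to $a$ by condition~(\ref{itm:D-W}); vertices in $D_r \setminus V(P)$ connect by condition~(\ref{itm:ell-r}); and intermediate vertices $x \in D_i \setminus V(P)$ with $\ell < i < r$ are handled by induction on the layer index $i$, using chordality (via Lemma~\ref{lem:induced-path-in-connected-sets} and the fact that each $D_i$ is a clique) to either reach $D_{i-1} \setminus V(P)$ directly through the clique $D_i$ or detour through $V \setminus D$ and invoke condition~(\ref{itm:D-W}).

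The main obstacle I anticipate is the intermediate-layer step: if the unique vertex $v_{i-1} \in V(P) \cap D_{i-1}$ is the sole neighbor of $x$ in $D_{i-1}$ and $x$ has no neighbor in $V \setminus D$, then no short direct route exists, and the proof must carefully exploit chordality to construct a longer detour through other vertices of $D_i \setminus V(P)$ or through a neighboring layer with multiple representatives. I expect the induction to go through, but the case analysis dealing with layers $i$ where $|D_i| = 1$ between $\ell$ and $r$ may be somewhat intricate.
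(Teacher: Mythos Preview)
Your necessity argument has a small oversight: conditions~(\ref{itm:D-W}) and~(\ref{itm:ell-r}) quantify over all $a \in D_\ell$ (and, for~(\ref{itm:ell-r}), all $b \in D_r$), including the single vertex of $V(P)\cap D_\ell$. You only treat $a \in D_\ell \setminus V(P)$. The fix is easy---pick $a' \in D_\ell \setminus V(P)$, use that $D_\ell$ is a clique to get $a \in N[a']$, and argue in $G - (V(P)\setminus\{a,b\})$---and this is exactly what the paper does.

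The substantive gap is in the sufficiency direction, precisely at the obstacle you yourself flag. Your layer-by-layer induction can stall: nothing in \Cref{lem:induced-path-in-connected-sets} or the clique property of the $D_j$ prevents every $x \in D_i \setminus V(P)$ from having $v_{i-1}$ as its sole neighbour in $D_{i-1}$, $v_{i+1}$ as its sole neighbour in $D_{i+1}$, and no neighbour in $V\setminus D$; in that scenario the only neighbours of $x$ in $G-V(P)$ lie in $D_i\setminus V(P)$, so routing through the clique $D_i$ gains nothing. Conditions~(\ref{itm:D-W}) and~(\ref{itm:ell-r}) say nothing directly about layer~$i$, so you have no hook for the inductive step. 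The paper sidesteps the induction entirely by invoking \Cref{lem:shortest-path-sep}. Since $|D_{\ell-1}|=|D_{r+1}|=1$, the unique $u\in D_{\ell-1}$ and $w\in D_{r+1}$ are adjacent to all of $D_\ell$ and $D_r$ respectively, hence have neighbours in the hub component $C$; thus $G[V(C)\cup\{u,w\}]$ contains a $u$-$w$ path $Q$. Every $v\in D_i\setminus V(P)$ with $\ell\le i\le r$ is an internal vertex of some shortest $u$-$w$ path, so by \Cref{lem:shortest-path-sep} the set $N[v]\setminus\{u,w\}$ is a $u$-$w$ separator and must meet $Q$, producing a neighbour of $v$ inside $C$ in one shot. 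That lemma is the missing ingredient in your plan.
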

\begin{proof}
The first case is trivial.
In the following, we prove the correctness of the second case.

First we show that the condition \ref{itm:D-W} is necessary.
Let $a \in D_{\ell}$ and $b \in V \setminus D$.
Since $|D_{\ell}| > 1$ and $|V(P) \cap D_{\ell}| = 1$,
there is a vertex $a' \in D_{\ell} \setminus V(P)$,
where $a'$ may be $a$ itself.
Since $V(P) \subseteq D$, it holds that $b \notin V(P)$.
Hence, $a'$ and $b$ belong to the same connected component of $G - V(P)$.
Since $D_{\ell}$ is a clique, $a \in N[a']$.
Thus, $a$ and $b$ belong to the same connected component of $G-(V(P) \setminus \{a,b\})$.
Therefore, $V(P)$ does not contain any $a$-$b$ separator.

Next we show that the condition \ref{itm:ell-r} is necessary.
Let $a \in D_{\ell}$ and $b \in D_{r}$.
As before, it suffices to show that
$a$ and $b$ belong to the same connected component of $G-(V(P) \setminus \{a,b\})$.
By the same reasoning in the previous case,
there are vertices $a' \in D_{\ell} \setminus V(P)$ and $b' \in D_{r} \setminus V(P)$
and they belong to the same connected component of $G-V(P)$.
Now, since $a \in N[a']$ and $b \in N[b']$,
$a$ and $b$ belong to the same connected component of $G-(V(P) \setminus \{a,b\})$.

Finally we show that the conditions \ref{itm:D-W} and \ref{itm:ell-r}
together form a sufficient condition for $P$ to be non-separating.
Assume that a shortest $s$-$t$ path $P$ satisfies the conditions \ref{itm:D-W} and \ref{itm:ell-r}.
Since $|D_{\ell}| > 1$ and $|V(P) \cap D_{\ell}| = 1$,
there is a connected component $C$ of $G - V(P)$ that contains at least one vertex of $D_{\ell}$.
Now the condition \ref{itm:D-W} implies that $V \setminus D \subseteq V(C)$ (recall that $V(P) \subseteq D$),
and the condition \ref{itm:ell-r} implies that $(D_{\ell} \cup D_{r}) \setminus V(P) \subseteq V(C)$ holds.
To complete the proof, it suffices to show that $D_{i} \setminus V(P) \subseteq V(C)$ for all $i$.
If $i < \ell$ or $i > r$, then $D_{i} \setminus V(P) = \emptyset$.
Let $v \in D_{i} \setminus V(P)$ for some $i$ with $\ell \le i \le r$.
Observe that $v$ is an internal vertex of a shortest path
from the unique vertex $u \in D_{\ell-1}$
to the unique vertex $w \in D_{r+1}$.
By \Cref{lem:shortest-path-sep}, $N[v] \setminus \{u,w\}$ is a $u$-$w$ separator.
Since $C$ is connected and $u, w$ have neighbors in $C$, 
$G[V(C) \cup \{u,w\}]$ contains a $u$-$w$ path $Q$.
Since $N[v] \setminus \{u,w\}$ is a $u$-$w$ separator,
$Q$ contains a vertex $q$ such that
\[
  q \in V(Q) \cap (N[v] \setminus \{u,w\}) = (V(Q)\setminus \{u,w\}) \cap N[v] \subseteq V(C) \cap N[v].
\]
Therefore, $v$ has a neighbor (i.e., $q$) in $V(C)$, and thus $v$ itself belongs to $C$.
\end{proof}

\begin{lemma}
The algorithm has a polynomial-time implementation.
\end{lemma}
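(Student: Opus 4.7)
The plan is to implement the algorithm in three stages. First, two BFS traversals from $s$ and $t$ compute $\dist(s, v)$ and $\dist(t, v)$ for every $v \in V(G)$, from which we directly read off the layers $D_i$, the set $D$, and the indices $\ell$ and $r$ (or conclude that $|D_i|=1$ for all $i$), in $O(n+m)$ time. If $|D_i|=1$ for all $i$, the unique shortest $s$-$t$ path $P$ is immediately obtained, and a single BFS in $G - V(P)$ tests non-separation in $O(n+m)$, handling the first case.

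For the non-unique case, I would first reformulate conditions \ref{itm:D-W} and \ref{itm:ell-r} into a single connectivity statement. Because $D_\ell$ and $D_r$ are cliques, removing the single vertex $v_\ell \in V(P) \cap D_\ell$ (resp., $v_r \in V(P) \cap D_r$) leaves $D_\ell \setminus \{v_\ell\}$ (resp., $D_r \setminus \{v_r\}$) as a still-connected clique, so every sub-condition of \ref{itm:D-W} and \ref{itm:ell-r} reduces to a common reachability check: the set $W := (D_\ell \setminus \{v_\ell\}) \cup (D_r \setminus \{v_r\}) \cup (V \setminus D)$ must lie in a single connected component of $G - V(P)$. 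Next, I would enumerate the $O(n^2)$ pairs $(v_\ell, v_r) \in D_\ell \times D_r$; for each pair, $W$ is determined, and the task reduces to deciding whether intermediate vertices $v_{\ell+1}, \ldots, v_{r-1}$ can be chosen so that consecutive ones are adjacent in $G$ and $W$ remains connected in $G - V(P)$. This inner decision is solved by a constrained search in the BFS DAG of $G[D]$, avoiding at each layer $D_i$ with $\ell < i < r$ the ``forbidden'' vertices whose removal would cut $W$.

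The main obstacle is showing that this per-pair check runs in polynomial time without enumerating exponentially many intermediate choices. My plan is to exploit Dirac's characterization: each $D_i$ with $\ell \le i \le r$ is a minimal $s$-$t$ separator of $G[D]$ and hence a clique, so deleting a single $v_i \in D_i$ leaves $D_i \setminus \{v_i\}$ as a non-empty clique capable of locally bridging the layered structure. This suggests that the forbidden set in each $D_i$ is small and can be precomputed by $O(n)$ connectivity queries, after which a single constrained BFS on the DAG decides the existence of a valid intermediate sequence. Granting this, each pair is handled in $O(n+m)$ time, yielding a total running time of $O(n^2(n+m))$, which is polynomial.
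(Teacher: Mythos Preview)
The proposal has a genuine gap at its core. The reformulation of conditions \ref{itm:D-W} and \ref{itm:ell-r} into the requirement that $W$ lie in one component of $G-V(P)$ is fine, but the subsequent claim---that this can be enforced by forbidding, independently in each layer $D_i$, those vertices ``whose removal would cut $W$''---is not justified and, as stated, is incorrect. The obstruction to $W$ being connected in $G-V(P)$ need not be a single layer vertex: it can be a \emph{pair} $\{v_i,v_{i+1}\}$ of consecutive path vertices that together form a minimal separator while neither vertex alone separates $W$. Your own phrase ``Granting this'' flags exactly the step that is missing; nothing in the argument controls interactions between choices in different layers, so the constrained BFS you describe does not decide the inner problem.

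The paper closes this gap with a structural observation you do not exploit. Because $P$ is a shortest path, $G[V(P)]$ contains no triangle, so every clique contained in $V(P)$ has size at most~$2$. Since minimal separators of a chordal graph are cliques, any minimal $a$-$b$ separator lying inside $V(P)$ is either a single vertex or an edge of $G$ (necessarily an edge of $P$). One therefore precomputes the families $\mathcal F_1$ and $\mathcal F_2$ of size-$1$ and size-$2$ minimal $a$-$b$ separators for $a\in D_\ell$ and $b\in (V\setminus D)\cup D_r$, deletes the vertices of $\mathcal F_1$ and the edges of $\mathcal F_2$ from $G$, and runs a \emph{single} shortest-path computation in the resulting graph. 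This bound on separator size is precisely what makes the constraint local (per vertex and per edge), removes the need for your $O(n^{2})$ enumeration of $(v_\ell,v_r)$ pairs, and yields an immediately polynomial algorithm.
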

\begin{proof}
Since $G$ is chordal, each minimal separator of $G$ is a clique.
Since $P$ is a shortest path, the size of a clique in $G[V(P)]$ is at most $2$.
Therefore, every minimal separator of $G$ contained in $V(P)$ has size at most $2$.
Furthermore, every size-$2$ minimal separator $\{u,v\}$ is an edge of $G$.
This observation gives us the following implementation of the algorithm
that clearly runs in polynomial time.

For $i \in \{1,2\}$, let $\mathcal{F}_{i}$ be the set of size-$i$ minimal $a$-$b$ separators of $G$
such that $a \in D_{\ell}$ and $b \in (V \setminus D) \cup D_{r}$.
It suffices to find a shortest $s$-$t$ path $P$ such that 
no element of $\mathcal{F}_{1} \cup \mathcal{F}_{2}$ is a subset of $V(P)$.
To forbid the elements of $\mathcal{F}_{1}$, we just remove the vertices that form the size-$1$ separators in $\mathcal{F}_{1}$.
Similarly, to forbid the elements of $\mathcal{F}_{2}$, we remove the edges corresponding to the size-$2$ separators in $\mathcal{F}_{2}$.
Now we find a shortest $s$-$t$ path $P$ in the resultant graph.
If $P$ has length $d = \dist_{G}(s,t)$, then $P$ is a non-separating shortest $s$-$t$ path in $G$.
Otherwise, $G$ does not have such a path.
\end{proof}

\begin{theorem}\label{thm:choral:poly}
    There is a polynomial-time algorithm for \textsc{Shortest Non-Separating Path} on chordal graphs, provided that $k$ is equal to the shortest path distance between $s$ and $t$.
\end{theorem}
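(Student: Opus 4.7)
The plan is to derive the theorem by packaging the two lemmas proven immediately above. I would first state the algorithm explicitly in two branches. Branch one handles the case where the shortest $s$-$t$ path is unique; branch two covers the remaining case, in which one searches for a shortest $s$-$t$ path satisfying the two separator-avoidance conditions \ref{itm:D-W} and \ref{itm:ell-r}. To decide which branch applies, I would compute the distance layering $D_{0}, D_{1}, \dots, D_{d}$ with $D_{i} = \{v \in V(G) : \dist(s,v) = i \land \dist(v,t) = d-i\}$ using one breadth-first search from $s$ and one from $t$, and then test whether every $D_{i}$ is a singleton; if so, the unique shortest path $P$ is read off directly and a single extra BFS checks whether $G - V(P)$ is connected.

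Correctness is then a direct appeal to the first of the preceding two lemmas, which establishes that in the non-unique branch conditions \ref{itm:D-W} and \ref{itm:ell-r} are jointly necessary and sufficient for a shortest $s$-$t$ path to be non-separating. For the running time I would invoke the second lemma: because $G$ is chordal and $P$ is a shortest path, any minimal separator sitting inside $V(P)$ is a clique of size at most $2$, i.e.\ either a vertex or an edge of $G$. Hence the forbidden sets $\mathcal{F}_{1}$ and $\mathcal{F}_{2}$ can be enumerated in polynomial time by scanning vertices and edges, the corresponding vertices and edges are deleted from $G$, and a final BFS from $s$ in the resulting graph either produces an $s$-$t$ path of length exactly $d = \dist_{G}(s,t)$---which is then a non-separating shortest $s$-$t$ path in $G$---or certifies that no such path exists.

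The remaining work is bookkeeping: confirming that each subroutine (the distance layering, identification of $\ell$ and $r$, construction of $\mathcal{F}_{1}$ and $\mathcal{F}_{2}$, vertex and edge removals, final shortest-path test, and the unique-path connectivity check) runs in polynomial time and is invoked a constant number of times. I do not expect a genuine obstacle here, since the substantive content is already discharged by the two lemmas; the only point worth double-checking is that the size-$1$ and size-$2$ minimal separators with prescribed endpoints can be recognized by a polynomial membership test, which follows from performing a connectivity check in $G$ after removing the candidate vertex or edge. With these pieces assembled, Theorem \ref{thm:choral:poly} follows.
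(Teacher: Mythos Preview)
Your proposal is correct and follows essentially the same approach as the paper: the theorem is simply a packaging of the two preceding lemmas (correctness of the two-branch algorithm and its polynomial-time implementation), and you have identified and invoked both in the right way. The additional bookkeeping you describe (BFS for the distance layers, testing singletons, the membership test for small separators) is straightforward and only makes explicit what the paper leaves implicit.
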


\section{\textsc{Shortest Non-Disconnecting Path}}

The goal of this section is to establish the fixed-parameter tractability and a conditional lower bound on polynomial kernelizations for \textsc{Shortest Non-Disconnecting Path}.

\subsection{Fixed-parameter tractability}

\begin{theorem}\label{thm:sndp}
    \textsc{Shortest Non-Disconnecting Path} can be solved in time $2^{\omega k}n^{O(1)}$, where $\omega$ is the matrix multiplication exponent and $n$ is the number of vertices of the input graph $G$.
\end{theorem}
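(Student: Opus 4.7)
The plan is to set up a dynamic program over $s$-$v$ trails, view each of their edge sets as an independent set of the cographic matroid $\mathcal{M}_{\mathrm{cog}}(G)$ (which is linear and representable in polynomial time), and prune each level of the DP table using the representative family machinery of \Cref{thm:rep}. The key preliminary observation is a relaxation: the instance $(G, s, t, k)$ is a \textsc{yes}-instance if and only if there is an edge set $F \subseteq E(G)$ with $|F| \le k$ that is independent in $\mathcal{M}_{\mathrm{cog}}(G)$ and such that $s$ and $t$ lie in the same connected component of the subgraph $(V(G), F)$. The ``only if'' direction is immediate; for the ``if'' direction, any simple $s$-$t$ path $P$ contained in $(V(G), F)$ satisfies $E(P) \subseteq F$, so $|E(P)| \le k$ and $G - E(P) \supseteq G - F$ is connected, whence $P$ is non-disconnecting. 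It therefore suffices to search for an $s$-$t$ \emph{trail} (walk with pairwise distinct edges) of length at most $k$ whose edge set is independent in $\mathcal{M}_{\mathrm{cog}}(G)$.

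Accordingly, I define $\mathcal{F}_{v, i}$ to be the family of edge sets of $s$-$v$ trails of length $i$ that are independent in $\mathcal{M}_{\mathrm{cog}}(G)$, with base case $\mathcal{F}_{s, 0} = \{\emptyset\}$, and build level $i + 1$ by the rule $\mathcal{F}_{v, i+1} \supseteq \{F \cup \{\{u, v\}\} : u \in N(v),\, F \in \widehat{\mathcal{F}}_{u, i},\, \{u, v\} \notin F,\, F \cup \{\{u, v\}\} \in \mathcal{I}(\mathcal{M}_{\mathrm{cog}}(G))\}$, where $\widehat{\mathcal{F}}_{u, i}$ is a $(k - i)$-representative of $\mathcal{F}_{u, i}$ obtained from \Cref{thm:rep} with $p = i$ and $q = k - i$. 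The algorithm outputs \textsc{yes} iff $\widehat{\mathcal{F}}_{t, i} \ne \emptyset$ for some $i \le k$. Correctness is by induction on $i$: if a non-disconnecting $s$-$t$ path $P^* = \pathseq{s = v_0, v_1, \ldots, v_{k^*} = t}$ with $k^* \le k$ exists and $F^*_i$ denotes the edge set of its $i$-edge prefix, then the invariant is that after level $i$ some $\hat{F}_i \in \widehat{\mathcal{F}}_{v_i, i}$ satisfies $\hat{F}_i \cap (E(P^*) \setminus F^*_i) = \emptyset$ and $\hat{F}_i \cup (E(P^*) \setminus F^*_i) \in \mathcal{I}$. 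The step $i \to i + 1$ extends $\hat{F}_i$ by $\{v_i, v_{i+1}\}$ (which the invariant forces to lie outside $\hat{F}_i$ and to keep the union independent) to obtain a candidate in $\mathcal{F}_{v_{i+1}, i+1}$, and then appeals to the representative property of $\widehat{\mathcal{F}}_{v_{i+1}, i+1}$ with target $Y = E(P^*) \setminus F^*_{i+1}$ of size at most $k - i - 1$. At $i = k^*$ this gives $\widehat{\mathcal{F}}_{t, k^*} \ne \emptyset$; conversely, any element of $\widehat{\mathcal{F}}_{t, i}$ is the edge set of an $s$-$t$ trail that is independent in $\mathcal{M}_{\mathrm{cog}}(G)$, which shortcuts to a non-disconnecting $s$-$t$ path of length at most $i \le k$ by the initial relaxation.

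For the running time, \Cref{thm:rep} bounds $|\widehat{\mathcal{F}}_{v, i}| \le n^{O(1)} \binom{k}{i}$, and hence $|\mathcal{F}_{v, i+1}| \le n^{O(1)} \binom{k}{i}$ as well. Plugging $p + q = k$ together with $\binom{k}{i}, \binom{k}{i+1} \le 2^k$ into the time estimate of \Cref{thm:rep}, each per-entry computation of a representative family costs $n^{O(1)} \cdot 2^k \cdot 2^{k(\omega - 1)} = n^{O(1)} \cdot 2^{\omega k}$; summing over the $O(nk)$ DP entries yields the claimed $2^{\omega k} n^{O(1)}$ bound. The main subtle point in the whole argument is the first-paragraph relaxation to trails: since ``being a simple path'' is not a matroid property, representative families cannot preserve it directly, and it is the observation that any independent edge set connecting $s$ to $t$ already suffices to recover a short non-disconnecting path that makes the combination of matroid pruning and DP cohere.
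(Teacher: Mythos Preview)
Your argument is correct and follows the same overall strategy as the paper: a length-layered dynamic program over partial walks ending at each vertex, with each DP cell pruned to a $(k-i)$-representative family in the cographic matroid of $G$. The one substantive technical difference is that the paper keeps \emph{simple} $s$--$v$ paths in each cell (enforced by a ``legitimate'' filter that also checks the path property) and states its inductive invariant as ``$F \cup E(P_i)$ is itself a non-disconnecting $s$--$t$ path,'' whereas you relax to $s$--$v$ \emph{trails} and carry only the matroid-level invariant $\hat F_i \cap (E(P^*)\setminus F^*_i)=\emptyset$ and $\hat F_i \cup (E(P^*)\setminus F^*_i)\in\mathcal I$, recovering a simple path only at the end via your shortcut observation. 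Your variant is arguably cleaner for exactly the reason you flag: extending a trail by a fresh edge is always again a trail, so your invariant propagates through the representative-family replacement $X \mapsto \widehat X$ without any additional structural argument; by contrast, the paper's stronger clause that $\widehat X \cup E(P_i)$ \emph{forms a path} is not something the representative property alone delivers, since $\widehat X$ (a simple $s$--$v_i$ path) may well revisit a vertex of $P_i$. Either way the resulting algorithm and the $2^{\omega k} n^{O(1)}$ running time are the same.
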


To prove this theorem, we give a dynamic programming algorithm with the aid of representative families of cographic matroids.
Let $(G, s, t, k)$ be an instance of \textsc{Shortest Non-Disconnecting Path}.
For $0 \le i \le k$ and $v \in V(G)$, we define $\DP(i, v)$ as the family of all sets of edges $F$ satisfying the following two conditions: (1) $F$ is the set of edges of an $s$-$v$ path of length $i$ and (2) $G - F$ is connected.
An edge set $F$ is \emph{legitimate} if $F$ forms a path and $G - F$ is connected.
For a family of edge sets $\mathcal F$ and an edge $e$, we define $\mathcal F \Join e := \{F \cup \{e\} : F \in \mathcal F\}$ and $\leg(\mathcal F)$ as the subfamily of $\mathcal F$ consisting of all legitimate $F \in \mathcal F$.
The following simple recurrence correctly computes $\DP(i , v)$.

\begin{numcases}{\DP(i, v) = }
    \{\emptyset\} & $i = 0$ and $s = v$ \label{eq:1}\\
    \emptyset & $i = 0$ and $s \neq v$ \label{eq:2}\\
    \leg\left(\displaystyle\bigcup_{u \in N(v)} (\DP(i-1, u) \Join \{u, v\})\right) & $i > 0$ \label{eq:3}.
\end{numcases}

A straightforward induction proves that $\DP(i, t) \neq \emptyset$ if and only if $G$ has a non-disconnecting $s$-$t$ path of length exactly $i$ and hence it suffices to check whether $\DP(i, t) \neq \emptyset$ for $0 \le i \le k$.
However, the running time to evaluate this recurrence is $n^{O(k)}$.
To reduce the running time of this algorithm, we apply \Cref{thm:rep} to each $\DP(i, v)$.
Now, instead of (\ref{eq:3}), we define
\begin{align}
    \DP(i, v) = \rep_{k - i}\left(\leg\left(\displaystyle\bigcup_{u \in N(v)} (\DP(i-1, u) \Join \{u, v\})\right)\right), \tag{3'}\label{eq:4}
\end{align}
where $\rep_{k - i}(\mathcal F)$ is a $(k-i)$-representative family of $\mathcal F$ for the cographic matroid $\mathcal M = (E(G), \mathcal I)$ defined on $G$.
In the following, we abuse the notation of $\DP$ to denote the families of legitimate sets that are computed by the recurrence composed of (\ref{eq:1}), (\ref{eq:2}), and (\ref{eq:4}).

\begin{lemma}\label{lem:dp-correctness}
    The recurrence composed of (\ref{eq:1}), (\ref{eq:2}), and (\ref{eq:4}) is correct, that is, $G$ has a non-disconnecting $s$-$t$ path of length at most $k$ if and only if $\bigcup_{0 \le i \le k}\DP(i, t) \neq \emptyset$.
\end{lemma}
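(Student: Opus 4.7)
The easy direction of the lemma is immediate: every element of $\DP(i, t)$ is legitimate by construction, hence the edge set of a non-disconnecting $s$-$t$ path of length $i \le k$. For the converse, my plan is a double induction. The outer induction is strong induction on the length $\ell$ of a hypothesised non-disconnecting $s$-$t$ path $P^* = \pathseq{s = v^*_0, \dots, v^*_\ell = t}$ with edge set $F^*$, and the base case $\ell = 0$ (so $s = t$) is witnessed by $\emptyset \in \DP(0, t)$.

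For the inductive step I would walk along $P^*$ by an inner induction on $i = 0, 1, \ldots, \ell$, maintaining the invariant that there exist $j_i \le i$ and $\widehat{F}_i \in \DP(j_i, v^*_i)$ with $\widehat{F}_i \cap Y_i = \emptyset$ and $\widehat{F}_i \cup Y_i$ independent in the cographic matroid, where $F^*_i$ denotes the length-$i$ prefix of $F^*$ and $Y_i := F^* \setminus F^*_i$ its remaining suffix. The base $i = 0$ is $\widehat{F}_0 = \emptyset \in \DP(0, s)$.

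The transition $i-1 \to i$ splits into two cases; write $e := \{v^*_{i-1}, v^*_i\}$. If $v^*_i \notin V(\widehat{F}_{i-1})$, then $\widehat{F}_{i-1} \cup \{e\}$ is a simple $s$-$v^*_i$ path of length $j_{i-1} + 1$ whose complement is connected, since $\widehat{F}_{i-1} \cup \{e\} \subseteq \widehat{F}_{i-1} \cup Y_{i-1}$ and the latter is independent by the invariant; hence it lies in the family on which $\rep_{k-(j_{i-1}+1)}$ is applied in (\ref{eq:4}). Because $|Y_i| = \ell - i \le k - (j_{i-1}+1)$, \Cref{thm:rep} delivers the required $\widehat{F}_i \in \DP(j_{i-1}+1, v^*_i)$, preserving the invariant with $j_i = j_{i-1}+1 \le i$. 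If instead $v^*_i \in V(\widehat{F}_{i-1})$, let $A$ be the prefix of the simple path $\widehat{F}_{i-1}$ from $s$ to $v^*_i$; using $v^*_i \neq s$ and $v^*_i \neq v^*_{i-1}$ one sees $|A| \le j_{i-1}-1 \le i-2$. The concatenation of $A$ with $Y_i$ is an $s$-$t$ walk with edge set $A \cup Y_i \subseteq \widehat{F}_{i-1} \cup Y_{i-1}$, which is independent; extracting a simple $s$-$t$ subpath $Q$ from this walk yields a non-disconnecting $s$-$t$ path of length $|E(Q)| \le |A|+|Y_i| \le \ell - 2 < \ell$, whence the outer induction hypothesis produces a non-empty $\DP(j,t)$. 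If the inner induction reaches $i = \ell$ without ever entering the second case, the final $\widehat{F}_\ell \in \DP(j_\ell, t)$ itself witnesses $\bigcup_j \DP(j,t) \neq \emptyset$.

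The main obstacle is precisely this second case: the representative-family theorem operates inside the cographic matroid and offers no guarantee that $\widehat{F}_{i-1}$ avoids the upcoming vertex $v^*_i$, so the naive extension by $e$ may fail to form a simple path and would be discarded by $\leg$. The resolution is to observe that whenever this failure occurs, the offending $\widehat{F}_{i-1}$ itself certifies a strictly shorter non-disconnecting $s$-$t$ path in $G$ (obtained by shortcutting through $v^*_i$), on which the outer strong induction on $\ell$ closes the argument.
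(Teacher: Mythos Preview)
Your proposal is correct, and it is in fact more careful than the paper's own argument on precisely the point you flag. The paper runs a single induction on $i$ with the invariant that some $F \in \DP(i, v_i)$ satisfies ``$F \cup E(P_i)$ forms a non-disconnecting $s$-$t$ path''; in the induction step it pulls $\widehat{X}$ out of the representative family and simply asserts that $\widehat{X} \cup E(P_i)$ again forms such a path. But \Cref{thm:rep} only delivers $\widehat{X} \cap E(P_i) = \emptyset$ and $\widehat{X} \cup E(P_i) \in \mathcal I$; nothing prevents the $s$-$v_i$ path $\widehat{X}$ from revisiting a later vertex $v_j$ of $P_i$, in which case the concatenation is a walk rather than a simple path and the next $\leg$ step would reject $\widehat{X} \cup \{\{v_i,v_{i+1}\}\}$. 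You weaken the inner invariant to matroid independence only (which is exactly what the representative family does preserve), and you discharge the problematic case by shortcutting through the repeated vertex to produce a strictly shorter non-disconnecting $s$-$t$ path, on which the outer strong induction on $\ell$ closes the argument. The cost is a second layer of induction and the slightly awkward $j_i \le i$ bookkeeping; the payoff is that every step is justified. The paper's sketch can be repaired by exactly the mechanism you describe, so the lemma itself is not in doubt, but your write-up is the more rigorous of the two.
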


\begin{proof}
    It suffices to show that $\DP(k', t) \neq \emptyset$ if $G$ has a non-disconnecting $s$-$t$ path $P$ of length $k' \le k$.
    Let $P = (v_0 = s, v_1, \ldots, v_{k'} = t)$ be a non-disconnecting path in $G$.
    For $0 \le i \le k'$, we let $P_i = (v_i, v_{i + 1}, \ldots, v_k)$.
    In the following, we prove, by induction on $i$, a slightly stronger claim that there is a legitimate set $F \in \DP(i, v_i)$ such that $F \cup E(P_i)$ forms a non-disconnecting $s$-$t$ path in $G$ for all $0 \le i \le k'$.
    As $\DP(0, s) = \{\emptyset\}$ and $P_0 = P$ itself is a non-disconnecting path, we are done for $i = 0$.
    Suppose that $i > 0$.
    By the induction hypothesis, there is a legitimate $F \in \DP(i - 1, v_{i - 1})$ such that $F \cup E(P_{i-1})$ forms a non-disconnecting $s$-$t$ path in $G$.
    Let $\mathcal F = \leg(\bigcup_{u \in N(v_i)}(\DP(i - 1) \Join \{u, v_i\}))$.
    Since $F \cup E(P_{i - 1})$ is legitimate, $F \cup \{\{v_{i - 1}, v_i\}\}$ is also legitimate, implying that $\mathcal F$ is nonempty.
    Let $\widehat{\mathcal F} = \rep_{k - i}(\mathcal F)$ be $(k-i)$-representative for $\mathcal F$ and let $Y = \{\{v_j, v_{j + 1}\} : i \le j < k'\}$.
    As $|Y| \le k - i$, $X \cap Y = \emptyset$, and $X \cup Y \in \mathcal I$, $\widehat{\mathcal F}$ contains an edge set $\widehat{X}$ with $\widehat{X} \cap Y$ and $\widehat{X} \cup Y \in \mathcal I$, implying that there is $\widehat{X} \in \DP(i, v_i)$ such that $\widehat{X}\cup E(P_i)$ forms a non-disconnecting $s$-$t$ path in $G$.
    Thus, the lemma follows.
\end{proof}

\begin{lemma}\label{lem:dp-time}
    The recurrence can be evaluated in time $2^{\omega k}n^{O(1)} \subset 5.18^kn^{O(1)}$, where $\omega < 2.373$ is the exponent of the matrix multiplication.
\end{lemma}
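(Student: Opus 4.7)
The plan is to charge the cost of each call to Theorem~\ref{thm:rep} and then sum over all cells of the dynamic programming table. Throughout, let $r$ denote the rank of the cographic matroid $\mathcal{M}$ of $G$; since $r \le |E(G)|$ and $k \le n-1$, every factor polynomial in $r$, $k$, and $|E(G)|$ is absorbed into $n^{O(1)}$. By Theorem~\ref{thm:rep} invoked with $p = i$ and $q = k - i$, the family $\DP(i, v)$ produced by~(\ref{eq:4}) has cardinality at most $r \cdot i \cdot \binom{k}{i} = n^{O(1)} \binom{k}{i}$.

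To evaluate~(\ref{eq:4}) at a fixed cell $(i, v)$, I first assemble $\mathcal{F}_{i,v} = \bigcup_{u \in N(v)}(\DP(i-1, u) \Join \{u, v\})$, which has size at most $\deg(v) \cdot \max_u |\DP(i-1, u)| = n^{O(1)} \binom{k}{i-1}$; filtering by $\leg(\cdot)$ costs polynomial time per set. The dominant expense is the single call to Theorem~\ref{thm:rep}: using $\omega \ge 2$, its inner running-time expression satisfies
\[
\binom{k}{i}\, i^{3} r^{2} + \binom{k}{i}^{\omega - 1}(ir)^{\omega - 1} \;\le\; \binom{k}{i}^{\omega - 1} \cdot n^{O(1)},
\]
so the per-cell cost is $\binom{k}{i-1}\binom{k}{i}^{\omega - 1} \cdot n^{O(1)}$.

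Summing over $v \in V(G)$ and $i \in \{1, \dots, k\}$ yields a total running time of $n^{O(1)} \sum_{i=1}^{k} \binom{k}{i-1}\binom{k}{i}^{\omega - 1}$, so the key step is the combinatorial bound $\sum_{i=1}^{k} \binom{k}{i-1}\binom{k}{i}^{\omega - 1} \le 2^{\omega k}$. This follows from the uniform estimate $\binom{k}{i}^{\omega - 1} \le 2^{k(\omega - 1)}$ together with $\sum_{j=0}^{k-1}\binom{k}{j} \le 2^{k}$. Hence the overall running time is $2^{\omega k} n^{O(1)}$, and since $2^{\omega} < 2^{2.373} < 5.18$ this is contained in $5.18^{k} n^{O(1)}$. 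The only real subtlety is simplifying Theorem~\ref{thm:rep}'s running-time expression using $\omega \ge 2$; the rest is routine bookkeeping of binomial coefficients.
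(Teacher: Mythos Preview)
Your argument is correct and follows the same approach as the paper: bound the size of each $\DP(i,v)$ via Theorem~\ref{thm:rep}, then charge the cost of the representative-family computation cell by cell and absorb all binomial coefficients into $2^{\omega k}$. The paper's own proof is essentially a one-line appeal to Theorem~\ref{thm:rep}; you have merely made the binomial bookkeeping explicit.
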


\begin{proof}
    By~\Cref{thm:rep}, $\DP(i, v)$ contains at most $2^kkn$ sets for $0 \le i \le k$ and $v \in V(G)$ and can be computed in time $2^{\omega k}n^{O(1)}$ by dynamic programming.
\end{proof}

Thus, \Cref{thm:sndp} follows.

\subsection{Kernel lower bound}

It is well known that a parameterized problem is fixed-parameter tractable if and only if it admits a kernelization (see~\cite{Cygan:Parameterized:2015}, for example).
By~\Cref{thm:sndp}, \textsc{Shortest Non-Disconnecting Path} admits a kernelization.
A natural step next to this is to explore the existence of polynomial kernelizations for \textsc{Shortest Non-Disconnecting Path}. 
However, the following theorem conditionally rules out the possibility of polynomial kernelization.
To prove this, we first show the following lemma.

\begin{lemma}\label{lem:sndp:cut}
    Let $H$ be a connected graph.
    Suppose that $H$ has a cut vertex $v$.
    Let $C$ be a component in $H - \{v\}$ and let $F \subseteq E(H[C \cup \{v\}])$.
    Then, $H - F$ is connected if and only if $H[C \cup \{v\}] - F$ is connected.
\end{lemma}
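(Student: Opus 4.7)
My plan is to prove both directions by exploiting the fact that the cut vertex $v$ is the unique ``portal'' between $C$ and the rest of $H$: the only edges of $H$ incident to $C \cup \{v\}$ but not lying in $H[C \cup \{v\}]$ are edges from $v$ to vertices outside $C$. Combined with the fact that $F$ only removes edges inside $H[C \cup \{v\}]$, this makes the two sides of $H$ ``interact'' with $F$ only through $v$.

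For the forward direction, I would assume $H - F$ is connected and fix two vertices $x, y \in C \cup \{v\}$. The plan is to take any path $P$ between them in $H - F$ and argue that $P$ never leaves $C \cup \{v\}$. Since the only edges connecting $C \cup \{v\}$ to $V(H) \setminus (C \cup \{v\})$ are incident to $v$, any excursion of $P$ outside $C \cup \{v\}$ must both enter and exit through $v$; as $P$ is simple, $v$ can appear at most once, so such an excursion is impossible unless $v$ is an endpoint, in which case the path terminates there. Hence $P$ lies in $H[C \cup \{v\}] - F$, giving connectedness of that subgraph.

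For the reverse direction, I would assume $H[C \cup \{v\}] - F$ is connected and exhibit, for arbitrary $x, y \in V(H)$, a path in $H - F$ between them. The key observation is that the subgraph $H - (C \setminus \emptyset)$ on vertex set $V(H) \setminus C$ contains no edges of $F$ (since $F \subseteq E(H[C \cup \{v\}])$) and is itself connected: the components of $H - v$ other than $C$ are each joined to $v$ by at least one edge of $H$, and none of these edges lies in $F$. I would then split into cases based on whether $x$ and $y$ lie in $C$, equal $v$, or lie outside $C \cup \{v\}$, and in each case concatenate a path in $H[C \cup \{v\}] - F$ (supplied by hypothesis) with a path in $H - C$, meeting at $v$.

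I do not anticipate any real obstacle; the only subtle point is the simple-path argument in the forward direction, which requires care in handling the case $y = v$ so that ``$v$ is visited only once'' is used consistently. The rest amounts to routine case analysis using that $v$ is a cut vertex and $F$ lives entirely on one side.
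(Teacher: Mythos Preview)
Your proposal is correct and follows essentially the same approach as the paper's proof: both directions hinge on $v$ being the unique portal between $C$ and the rest of $H$, with the forward direction arguing that any path between vertices of $C\cup\{v\}$ in $H-F$ stays inside $C\cup\{v\}$, and the reverse direction observing that $F$ leaves the ``outside'' part $H-C$ untouched so everything reaches $v$. Your write-up is slightly more explicit about the simple-path argument, but the ideas are identical; note the stray ``$C\setminus\emptyset$'' in your reverse direction should just read $C$.
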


\begin{proof}
    If $H - F$ is connected, then all the vertices in $C \cup \{v\}$ are reachable from $v$ in $H - F$ without passing through any vertex in $V(H) \setminus (\{C\} \cup \{v\})$.
    Thus, such vertices are reachable from $v$ in $H[C \cup \{v\}] - F$.
    Conversely, suppose $H[C \cup \{v\}] - F$ is connected.
    Then, every vertex in $C$ is reachable from $v$ in $H - F$.
    Moreover, as $F$ does not contain any edge outside $H[C \cup \{v\}]$, every other vertex is reachable from $v$ in $H - F$ as well.
\end{proof}

\begin{theorem}\label{thm:sndp-kernel}
    Unless \emph{coNP} $\subseteq$ \emph{NP}$/$\emph{poly}, \textsc{Shortest Non-Disconnecting Path} does not admit a polynomial kernelization (with respect to parameter $k$).
\end{theorem}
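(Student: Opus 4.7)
The plan is to invoke \Cref{thm:composition}: since the unparameterized version of \textsc{Shortest Non-Disconnecting Path} is trivially in NP and was shown NP-hard in~\cite{Mao:Shortest:2021}, it is NP-complete, so it suffices to design an OR-composition with $k' = k^{O(1)}$. Given $p$ instances $(G_1, s_1, t_1, k), \ldots, (G_p, s_p, t_p, k)$ (the usual assumption of a common parameter is obtained by the standard padding), I would build $(G', s', t', k')$ by taking the disjoint union of the $G_i$'s, identifying the sources $s_1, \ldots, s_p$ into a single vertex $s'$, introducing a fresh vertex $t'$, and adding an edge $\{t_i, t'\}$ for every~$i$. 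The parameter is $k' = k+1$, and the case $p = 1$ is trivial, so we may assume $p \ge 2$.

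The easy direction of the equivalence lifts a non-disconnecting $s_i$-$t_i$ path $P_i$ of length at most $k$ in some $G_i$ to $P := E(P_i) \cup \{\{t_i, t'\}\}$ in $G'$: the subgraph $G_i - E(P_i)$ is connected and contains $s' = s_i$, each other $G_j$ is untouched and meets $s'$ at $s_j$, and $t'$ still has the edges $\{t_j, t'\}$ for $j \ne i$ connecting it to the rest. For the converse, the structure of $G'$ forces any $s'$-$t'$ path $P$ of length at most $k+1$ to consist of a path $P_i$ inside a single copy $G_i$ from $s_i$ to $t_i$, followed by the edge $\{t_i, t'\}$: once the path leaves $s'$ it is trapped inside one $G_i$ because $s'$ is the only vertex common to two copies, and the only neighbors of $t'$ in $G'$ are the $t_j$'s.

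The main technical step is showing that this $P_i$ is then non-disconnecting in $G_i$, and this is where I expect the principal care to go. Suppose instead that $G_i - E(P_i)$ has a component $C$ with $s' \notin V(C)$. The edges of $G'$ leaving $V(C)$ are exactly the edges of $G_i$ leaving $C$, all of which lie in $E(P_i) \subseteq E(P)$, together with $\{t_i, t'\}$ in case $t_i \in V(C)$, which is also in $E(P)$; no edge from any $G_j$ with $j \ne i$ can reach $V(C)$ because $V(C) \subseteq V(G_i) \setminus \{s'\}$ and distinct copies meet only in $s'$. Hence $V(C)$ would be an isolated component of $G' - E(P)$, contradicting that $P$ is non-disconnecting. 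Combining the two directions gives $(G', s', t', k') \in L$ iff $(G_i, s_i, t_i, k) \in L$ for some $i$, so the construction is an OR-composition and \Cref{thm:composition} delivers the conditional kernel lower bound.
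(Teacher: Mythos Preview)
Your proof is correct and uses the same OR-composition idea as the paper: glue the $p$ instances at one endpoint, attach a fresh vertex at the other, and set $k'=k+1$. The only difference is which end is which: you identify the sources $s_i$ into $s'$ and add a fresh target $t'$, whereas the paper adds a fresh source $s$ adjacent to every $s_i$ and identifies the targets $t_i$ into a single $t$. Because of this dual choice, the paper first preprocesses each $G_i$ (via \Cref{lem:sndp:cut}) so that $t_i$ is not a cut vertex and then invokes \Cref{lem:sndp:cut} once more on the glued graph to transfer connectivity; your direct argument---pick a component of $G_i-E(P_i)$ avoiding $s'$ and observe that all edges leaving it lie in $E(P)$---sidesteps both steps and is slightly cleaner.
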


\begin{proof}
    We give an OR-composition for \textsc{Shortest Non-Disconnecting Path}.
    Let $(G_1, s_1, t_1, k), \ldots, (G_p, s_p, t_p, k)$ be $p$ instances of \textsc{Shortest Non-Disconnecting Path}.
    We assume that for $1 \le i \le p$, $t_i$ is not a cut vertex in $G_i$. 
    To justify this assumption, suppose that $t_i$ is a cut vertex in $G_i$.
    Let $C$ be the component in $G_i - \{t_i\}$ that contains $s_i$.
    By~\cref{lem:sndp:cut}, for any $s_i$-$t_i$ path, it is non-disconnecting in $G_i$ if and only if so is in $G_i[C \cup \{t_i\}]$.
    Thus, by replacing $G_i$ with $G_i[C]$, we can assume that $t_i$ is not a cut vertex in $G_i$.
    
    From the disjoint union of $G_1, \ldots, G_p$, we construct a single instance $(G, s, t, k')$ as follows.
    We first add a vertex $s$ and an edge between $s$ and $s_i$ for each $1 \le i \le p$.
    Then, we identify all $t_i$'s into a single vertex $t$.
    See \Cref{fig:or-comp} for an illustration.
    \begin{figure}
        \centering
        \includegraphics{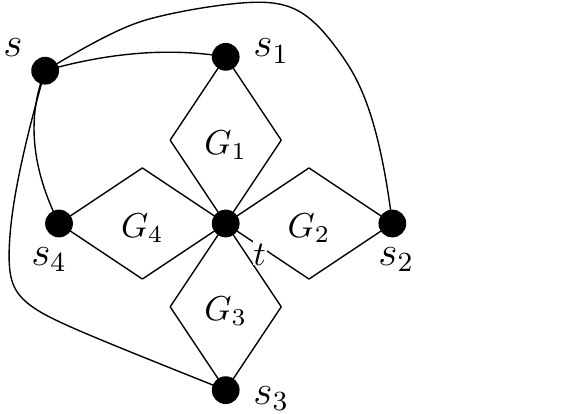}
        \caption{An illustration of the graph $G$ obtained from $q = 4$ instances.}
        \label{fig:or-comp}
    \end{figure}
    In the following, we may not distinguish $t$ from $t_i$.
    Now, we claim that $(G, s, t, k + 1)$ is a yes-instance if and only if $(G_i, s_i, t_i, k)$ is a yes-instance for some $i$.
    
    Consider an arbitrary $s$-$t$ path in $G$.
    Observe that all edges in the path except for that incident to $s$ are contained in a single subgraph $G_i$ for some $1 \le i \le p$ as $\{s, t\}$ separates $V(G_i) \setminus \{t_i\}$ from $V(G_j) \setminus \{t_j\}$ for $j \neq i$.
    Moreover, the path $P$ forms $P = (s, s_i, v_1, \ldots, v_q, t)$, meaning that the subpath $P' = (s_1, v_1, \ldots, v_q, t_i)$ is an $s_i$-$t_i$ path in $G_i$.
    This conversion is reversible: for any $s_i$-$t_i$ path $P'$ in $G_i$, the path obtained from $P'$ by attaching $s$ adjacent to $s_i$ is an $s$-$t$ path in $G$.
    Thus, it suffices to show that for $F \subseteq E(G_i)$, $F \cup \{\{s, s_i\}\}$ is a cut of $G$ if and only if $F$ is a cut of $G_i$.
    Since $t$ is a cut vertex in $G - \{\{s, s_i\}\}$, by~\Cref{lem:sndp:cut}, the claim holds.
\end{proof}

\bibliography{main}

\end{document}